\newtheorem{theorem}{Theorem}
\newtheorem{corollary}[theorem]{Corollary}
\newtheorem{proposition}[theorem]{Proposition}
\newenvironment{proof}[1][Proof]{\noindent\textbf{#1.} }{\ \rule{0.5em}{0.5em}}
\begin{document}

\title{ Search-and-Rescue Rendezvous}
\author{Pierre Leone and Steve Alpern}

\begin{abstract}
We consider a new type of asymmetric rendezvous search problem in which
Agent II needs to give Agent I a `gift' which can be in the form of
information or material. The gift can either be transfered upon meeting, as
in traditional rendezvous, or it can be dropped off by II at a location he
passes, in the hope it will be found by I. The gift might be a water
bottle for a traveller lost in the desert; a supply cache for Lieutenant
Scott in the Antarctic; or important information (left as a gift). The
common aim of the two agents is to minimize the time taken for I to either
meet II or find the gift. We find optimal agent paths and droppoff times
when the search region is a line, the initial distance between the players
is known and one or both of the players can leave gifts. When there are no gifts this is the
classical asymmetric rendezvous problem solved by Alpern and Gal in 1995 \cite{alperngal1995}.  We exhibit strategies solving these various problems
and use a `rendezvous algorithm' to establish their optimality.

\end{abstract}

\maketitle

\section{Introduction}

The rendezvous search problem asks how two (or more) non-communicating
unit-speed agents, randomly placed in a known dark search region, can move
about so as to minimize their expected meeting time. It was first proposed
in \cite{AlpernSeminar} and formalized in \cite{doi:10.1137/S0363012993249195}. Here we take the search
region to be the real line. The form of the problem dealt with in this paper
is called the `player-asymmetric' (or 'indistinguishable player') version.
This means that the players (agents) can agree before the start of the game
which strategies each will adopt: for example if the search region were a
circle, they could agree that Player I would travel clockwise and Player II
counter-clockwise. In our problem they begin the game by being placed a
known distance $D$ apart on the line, but of course neither knows the
direction to the other player. The solution to this problem given in (Alpern
and Gal, 1995 \cite{alperngal1995}) achieves a rendezvous value (least expected meeting time) of $%
13D/8.$ Their solution is best presented as a modification of the simple
`wait for Mommy' strategy in which Player II\ (Baby) stays still and I
(Mommy) searches optimally for an immobile target: Mommy first goes distance 
$D$ in one direction (chosen randomly) and then $2D$ in the other. This
gives an expected meeting time of $\left( 1/2\right) D+\left( 1/2\right)
\left( D+2D\right) =2D.$ If Baby optimizes against Mommy's strategy, he goes
a distance $D/2$ in some direction (hoping to meet an oncoming Mommy) and
then back to his starting location at time $D$ (in case Mommy comes there
from the other direction). If he has not met Mommy, he knows she is now at a
distance $2D$ because she went in the wrong direction. So he goes a distance 
$D$ is some direction and then back to his starting location again. The
equally likely meeting times for this `modified wait for Mommy strategy' are 
$D/2,$ $D,$ $2D$ and $3D$ with the stated average of $13D/8,$ which can be
shown to be the best possible time. Two improved proofs of the optimality of
this strategy pair are given in \cite{alpern2003search}.

This paper introduces a new asymmetry into the rendezvous problem. We
consider that one of the players (taken as I) is lost and needs a `gift',
say food or water. The other player (taken as II) is the rescuer, who has
plenty of this resource to give. He can either give it to I by finding him,
as in the original rendezvous problem, or by leaving a canteen of water or
cache of food which is later found by the thirsty or hungry I. That is, the
game ends when Player I either meets Player II or finds a gift he has left
for him. This is the case where there is one gift and we denote this game $G_1$.

Another version of this problem is where each player has a gift. First
consider the case where a boy and girl who like each other are trying to
find each other at a rock concert. They can move about and hope to meet and
they can also leave their phone number on some bulletin board. The game ends
when they meet or when \textit{one of them }finds the other's phone number
and calls. More generally, the game ends when the players meet or when one
finds a gift left by the other. Here the gift is interpreted as containing
information rather than resources. We denote this game by $G_2^{or}$.  A second case of two gifts is when two
players are lost and one has food but needs water and the other has water
and needs food. One can leave a cache of food along his path while the other
can leave a canteen of water. (Each still has plenty left for himself or to
give to the other upon meeting.) Here the game ends when either the two
players meet or when \textit{both players} have found the gift left by the
other. We denote this game by $G_2^{and}$.

We solve all of these rendezvous problems. For comparison it is easiest to
take the initial distance as $D=16,$ so that the original rendezvous time
(expected time for the game to end) of Alpern-Gal is $26=13(16)/8$. This
time goes down in all cases. Table \ref{table:games} summarizes our main results,
where $\tau $ denotes the optimal time(s) to drop off the gift(s) and $\bar{R%
}$ is the rendezvous value (least expected meeting time)%
\hskip -1cm \begin{table}
\begin{eqnarray*}
&&.
\hskip -1.5cm\begin{tabular}{lllll}
Name &\vline&Problem & $\tau $ & $\bar{R}$ \\ 
\hline
G&\vline&No gifts  (Alpern-Gal, 1995) & - & 26 \\ 
$G_1$&\vline&One gift & 4 & [20.99984, 21] \\ 
$G_2^{or}$&\vline&Each player has a gift, one must be found &(8,8)  & [19.99968, 20]  \\ 
$G_2^{and}$&\vline&Each player has a gift &(0,0), (8,8), (4,x)  & [23.99968, 24] 
\end{tabular}
\\
&&
\end{eqnarray*}\label{table:sum}\caption{Main results for optimal dropoff times and rendezvous value for $D=16$. The x denotes immaterial dropping time. The best strategies found are explicited in Theorems \ref{theo:g1}, \ref{theo:g2or} and \ref{theo:g2and}.}\label{table:games}
\end{table}

When there is one gift, the rendezvous time is $21$ (the gift is dropped at
time $D/4=4$). When there are two gifts, but only one has to be found to end
the game, the rendezvous time is $20$ (the gifts are both dropped at time $D/2=8$). When there are two gifts and both must be found to end the game, the
rendezvous time is $24$, with the gifts being
dropped at times $(0,0)$, $(D/4=4,x)$ or $(D/2=8,D/2=8)$. These results are summarized in Table \ref{table:games}.

It will be of interest to consider all of these problems in the two
dimensional setting of a planar grid ($Z^{2\text{ }}),$ as initiated for
asymmetric rendezvous \cite{andersonfekete2001} and studied in \cite{chester2004}, and on arbitrary graphs as studied in \cite{alpern2002a}. The use of
gifts could also be studied in the rendezvous contexts of Howard
\cite{howard1999}, Lim \cite{lim1997}, Anderson and Essegaier \cite{andersonessegaier1995}, Han et al \cite{han2008} and in other settings discussed
in the survey Alpern \cite{alpern2002b}. Gifts  might also be used in the
discrete rendezvous problem solved by Weber \cite{weber2012}.

The main tool that we use is to confine the rendezvous strategies to a finite set  when the time (or times) of dropping the gifts is fixed, see section \ref{sec:dropknown}. Actually, in Proposition \ref{prop:noturns}  we identify a finite set of particular events that can be turning points for the two players. In between turning points, players can only move in a fixed direction at maximal velocity. Then, we compute the game values for dropping times that vary
over a grid with small diameter, and continuity properties of the
rendezvous time provide bound for the optimal solution, see section \ref{sec:boundingsol}.

The paper is divided into sections as follows. The next section is dedicated to the literature review. In section \ref{sec:introresults} we explain how
the problem can be thought of in terms of a single Player I who wants to
find all four `agents' of Player II. This type of analysis goes back to
\cite{alperngal1995}. In section \ref{sec:results} we give the solution to all of the
versions of the problem  discussed
above. Simply presenting these strategies and evaluating the possible
meeting times gives an upper bound on the rendezvous times. But the
possibility of better strategies is not ruled out. Section 5 presents our
algorithm for optimizing rendezvous strategies when the dropping-times are
given, and thus proves the approximate or exact optimality of each of the
strategies given in section 4. We show that they minimize the rendezvous
times over the dropping time grid. In section 6, we show how the optimal values of the new games can be upper and lower bounded given exact values computed for dropping times in a regular mesh. Moreover, we show how bound the regions where the dropping-times leading to optimal strategy are. These results are summarized in Theorems \ref{theo:one} and \ref{theo:two}. Sections 7 and 8 present numerical values for the lower and upper bounds. The lower bound are not computed as precisely as 'possible'. Improving the accuracy is only a matter of computing time.

\section{Literature Review}

The rendezvous search problem was first proposed by Alpern \cite{AlpernSeminar} in a
seminar given at the Institue for Advanced Study, Vienna. Many years passed
before the problems presented there were properly modeled. The first model,
where the players could only meet at a discrete set of locations,  was
analyzed by Anderson and Weber \cite{10.2307/3214827}. This difficult problem was later
solved for three locations by Weber \cite{weber2012}. Rendezvous-evasion on discrete
locations was studied by Lim \cite{lim1997} and solved for two locations (boxes) by
Gal and Howard \cite{galhoward2005}. 

The rendezvous search problem for continuous space and time, including the
infinite line, was introduced by Alpern \cite{doi:10.1137/S0363012993249195}. The player-asymmetric form
of the problem (used in this paper), where players can adopt distinct
strategies, was introduced in Alpern and Gal \cite{alperngal1995}. Baston and Gal \cite{bastongal2001}
allowed the players to leave markers at their starting points (e.g. the
parachutes they used to arrive). The last two papers form the starting point
of the present paper. We apply the same method as the one  in this paper to the rendezvous problem on the line with markers. Our findings are that the solution in \cite{bastongal2001} seems optimal even if we allow the dropping times of markers to be chosen. Results are going to be published independently of this paper.

The corresponding player-symmetric problem on the line
was developed by Anderson and Essegaier \cite{andersonessegaier1995}. Their results have been
successively improved by Baston \cite{baston1999}, Gal \cite{gal1999}, and Han et al \cite{han2008}.
These papers assumed that the initial distance between the players on the
line was known. The version where the initial distance between the players
is unknown was studied by Baston and Gal \cite{bastongal1998}, Alpern and Beck \cite{alpernbeck1999,alpernbeck2000}
 and Ozsoyeller et al \cite{ozsoyeller2013}.

The continuous rendezvous problem has also been extensively studied on
finite networks: the unit interval and circle by Howard \cite{howard1999}; arbitrary
networks by Alpern \cite{alpern2002a}; planar grids by Anderson and Fekete \cite{andersonfekete2001} and 
Chester and Tutuncu \cite{chester2004}, the star graph by Kikuta and Ruckle \cite{kikuta2007}. 

The present paper is an application of rendezvous search to
`search-and-rescue' operations. A different application of search theory to
that area is in Alpern \cite{alpern2011}, where the Searcher must find the Hider
(injured person) and then bring him back to a specified first aid location.
An application of rendezvous to robotic exploration is given in Roy and
Dudek \cite{roy2001}. An application of rendezvous to the communications problem of
finding a common channel is given in Chang et al \cite{chang2015}. Using markers in communication networks to help matching publishers and consumers of information is suggested in \cite{DBLP:conf/safecomp/LeoneM13,DBLP:conf/safecomp/LeoneM14,kundig2016}. These works have relevant applications to anonymous communication networks where the content of information is important (content based routing). It is observed that decentralized search strategies prove to be efficient in terms of congestion and the search times are well concentrated. A survey of the
rendezvous search problem is given in Alpern \cite{alpern2002b}.


\section{Formalization of the Problem(s)}\label{sec:introresults}

We begin by presenting the formalization of the problem when there are no
gifts,  as given in \cite{alperngal1995}. Two players, $I$ and $II,$
are placed a distance $D$ apart on the real line, and faced in random
directions. They are restricted to moving at unit speed, so there position,
relative to their starting point, is given by a function $f\left( t\right)
\in \mathcal{F}$ where 
\begin{equation}\label{equ:bigspace}
\mathcal{F}~=\left\{ f:\left[ 0,T\right] \rightarrow R,~f\left( 0\right)
=0,\left\vert f\left( t\right) -f\left( t^{\prime }\right) \right\vert \leq
\left\vert t-t^{\prime }\right\vert \right\} ,
\end{equation}
for some $T$ sufficiently large so that rendezvous will have taken place. In
fact optimal paths turn out to be much simpler. We will see that optimal
paths are piecewise linear with slopes $\pm 1$ and so they can be specified
by their turning points. Suppose $I$ chooses path $f\in \mathcal{F}~$and $i$
chooses path $g\in \mathcal{F}~.$ The meeting time depends on which way they
are initially facing. If they are facing each other, the meeting time is
given by%
\[
t^{1}=t^{\rightarrow \leftarrow }=\min \left\{ t:f\left( t\right) +g\left(
t\right) =D\right\} .
\]%
If they are facing away from each other, the meeting time is given by%
\[
t^{2}=t^{\leftarrow \rightarrow }=\min \left\{ t:-f\left( t\right) -g\left(
t\right) =D\right\} .
\]%
If they are facing the same way, say both left, and $I$ is on the left, the
meeting time is given by%
\[
t^{3}=t^{\leftarrow \leftarrow }=\min \left\{ t:-f\left( t\right) +g\left(
t\right) =D\right\} .
\]%
If $I$ is on the left and they are both facing right, the meeting time is
given by%
\[
t^{4}=t^{\rightarrow \rightarrow }=\min \left\{ t:+f\left( t\right) -g\left(
t\right) =D\right\} .
\]%
To summarize, the four meeting times when strategies (paths) $f$ and $g$ are
chosen are given by the four values, see Figure \ref{fig:rdvopt},%
\[
\min \left\{ t:\pm f\left( t\right) \pm g\left( t\right) =D\right\} .
\]%
\begin{figure}
\includegraphics[scale=0.4]{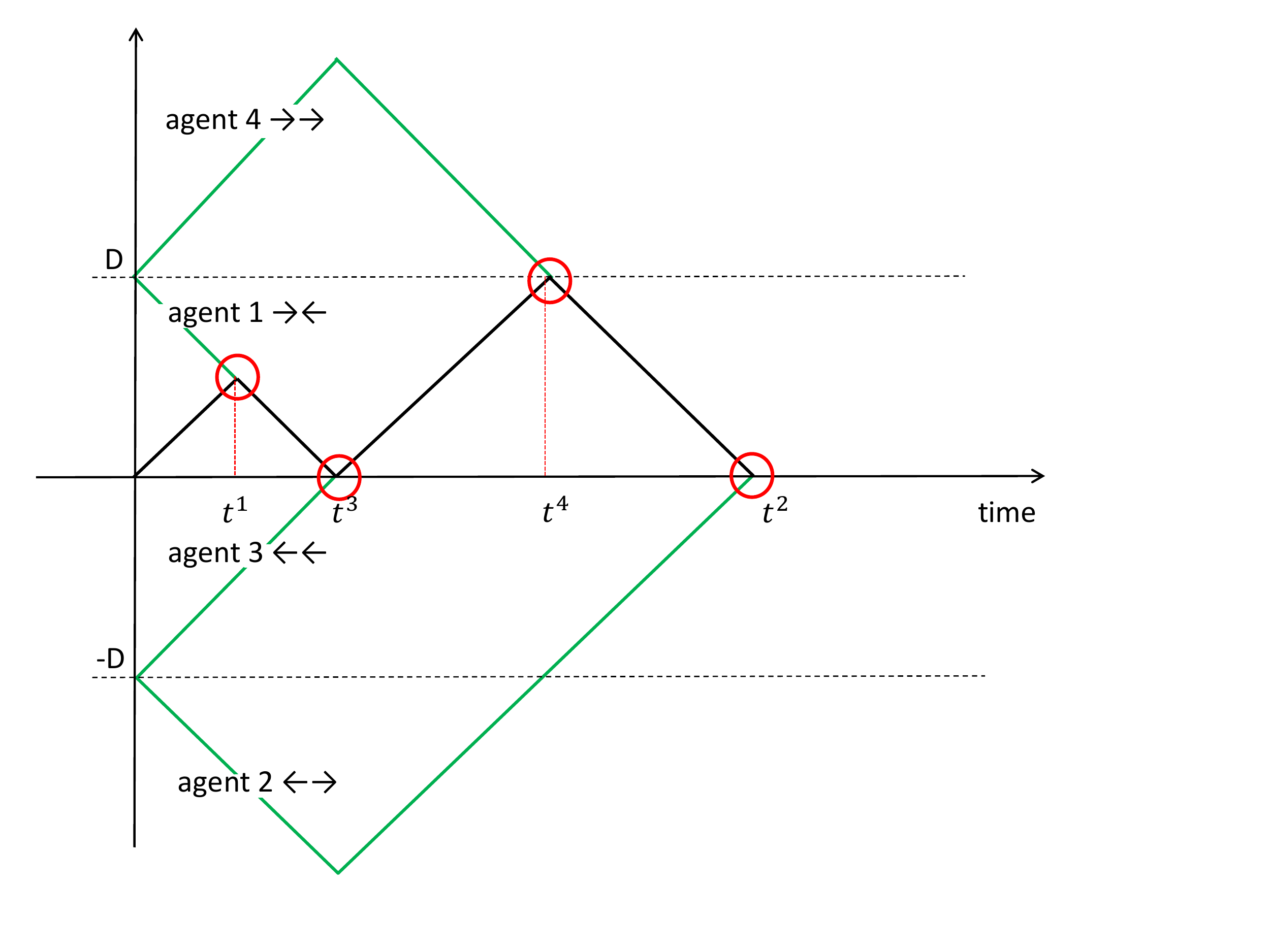}\caption{Player I starts at position $0$ and the four agents of player II representing every initial state choosen by nature.}\label{fig:rdvopt}
\end{figure}
The Rendezvous time for given strategies is their expected meeting time%
\begin{equation}
R\left( f,g\right) =\frac{1}{4}\left( t^{1}+t^{2}+t^{3}+t^{4}\right) .
\label{Rformula}
\end{equation}%
The Rendezvous Value $\bar{R}$ is the optimum expected meeting time, 
\begin{equation}
\bar{R}=\min_{f,g\in \mathcal{F}~}R\left( f,g\right) =R\left( \bar{f},\bar{g}%
\right) .  \label{Rbar}
\end{equation}

There is a simple interpretation of the formula (\ref{Rformula}) as the
average time for player $I$ (whose position at time $t$ is $f\left( t\right) 
$) to meet four agents of player $II.$ We take as the origin of the line the
starting point of Player $I$ and we take his forward direction to be the
positive direction on the line (up, if the line is depicted vertically). The
four 'agents' of $II$ start at $+D$ and $-D$ and face up or down, so their
paths are $\pm D\pm g\left( t\right) .$ The meeting times with these
`agents' are exactly the rendezvous times $t^{i},$ $i=1,2,3,4$, see Figure \ref{fig:rdvopt}. 

It has been shown for the `no gift' case, that optimal paths are of the form%
\begin{equation}\label{equ:turningpoint}
f=\left[ f_{1},\dots ,f_{k}\right] ,
\end{equation}
where the times $f_{k}$ are the turning points of the path $f,$ namely%
\[
f^{\prime }\left( t\right) =\left\{ 
\begin{array}{ll}
+1, & \text{for }f_{2j}\leq t\leq f_{2j+1}~~\text{(where }f_{0}\equiv 0\text{%
), and} \\ 
-1 & \text{for }f_{2j-1}\leq t\leq f_{2j}.%
\end{array}%
\right. 
\]%
If a player has a gift  to drop off, we denote his strategy by%
\begin{equation}\label{equ:dropturningpoint}
f=\left[ \tau ;f_{1},\dots ,f_{k}\right] ,
\end{equation}
where $\tau $ is the dropoff time and the $f_{j}$ are as above. We are now
in a position to state and illustrate the initial result of the field,  for
the case of no gifts.

\begin{theorem}[Alpern - Gal (1995b)]{\bf ($G$-Game )}
An optimal solution pair for the asymmic rendezvous problem on the line,
with initial distance $D,$ is given by, see Figure \ref{fig:rdvopt},%
\[
\bar{f}=\left[ D/2,D/2,D\right], \text{ }\bar{g}=\left[ D\right].
\]%
The corresponding meeting times are 
\begin{equation}
t_{1}=t^{1}=D/2,~t_{2}=t^{4}=D,\text{ }t_{3}=t^{3}=2D,\text{ }t_{4}=t^{2}=3D,
\label{tsubj}
\end{equation}%
with Rendezvous Value 
\[
\bar{R}=R\left( \bar{f},\bar{g}\right) =\left( D/2+D+2D+3D\right) /4=13D/8.
\]
\end{theorem}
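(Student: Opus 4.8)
The statement splits into an easy half — that the exhibited pair $(\bar f,\bar g)$ produces meeting times $D/2,D,2D,3D$ and hence $R(\bar f,\bar g)=13D/8$ — and the substantive half, optimality. For the easy half the plan is simply to substitute the stated paths into $\min\{t:\pm f(t)\pm g(t)=D\}$; since both paths are piecewise linear with slopes $\pm1$ and only a few breakpoints, each of the four minimisations is the intersection of two broken lines and one reads off \eqref{tsubj} directly. This gives $\bar R\le 13D/8$, and I would not dwell on it.

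For optimality the plan is to change variables to $u=f-g$ and $v=f+g$. These start at $0$, and since $|u'|+|v'|=|f'-g'|+|f'+g'|=2\max(|f'|,|g'|)\le 2$ almost everywhere, the total variations satisfy $\mathrm{TV}_{[0,t]}(u)+\mathrm{TV}_{[0,t]}(v)=\int_0^t(|u'|+|v'|)\,ds\le 2t$ for every $t$. Rereading the definitions of $t^1,\dots,t^4$ shows that these four numbers are exactly the first hitting times of the four thresholds $v=+D$, $v=-D$, $u=-D$, $u=+D$ (two thresholds for $v$, two for $u$), so — assuming all four are finite, else $R(f,g)=\infty$ and the bound is trivial — I would order them $t_{(1)}\le t_{(2)}\le t_{(3)}\le t_{(4)}$ and bound each in turn. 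The one elementary lemma needed is: a continuous function starting at $0$ that reaches both $+D$ and $-D$ on $[0,t]$ has variation $\ge 3D$ there ($\ge D$ to reach whichever comes first, then $\ge 2D$ to cross to the other), while one that reaches just one of $\pm D$ has variation $\ge D$.

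With this in hand the accounting is quick, because by time $t_{(k)}$ at least $k$ of the four thresholds have been hit: $k=1$ gives $2t_{(1)}\ge D$; $k=2$ gives $2t_{(2)}\ge 2D$ if the two thresholds hit involve both $u$ and $v$ and $\ge 3D$ if they are two thresholds of the same function, hence $t_{(2)}\ge D$; $k=3$ forces, by pigeonhole, one of $u,v$ to have hit both of its thresholds and the other at least one, so $2t_{(3)}\ge 3D+D$, i.e. $t_{(3)}\ge 2D$; and $k=4$ gives $2t_{(4)}\ge 3D+3D$, i.e. $t_{(4)}\ge 3D$. Summing, $t^1+t^2+t^3+t^4\ge D/2+D+2D+3D=13D/2$, so $R(f,g)\ge 13D/8$ for all $f,g\in\mathcal F$; with the upper bound this forces equality, attained at $(\bar f,\bar g)$.

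I expect the only real obstacle to be getting this reformulation and its bookkeeping exactly right — in particular verifying that the four $t^i$ genuinely are first hitting times of four \emph{distinct} thresholds (two of $u$, two of $v$) and that the statement ``at least $k$ thresholds are hit by time $t_{(k)}$'' is applied correctly. If the $u,v$ substitution were for some reason to be avoided, the fallback is the more laborious direct argument in the original coordinates: Player I chases the four agents at positions $\pm D\pm g(t)$, and one bounds the ordered capture times by a case analysis on the order of capture — essentially the route of the original Alpern--Gal proof and of the alternative proofs in \cite{alpern2003search}.
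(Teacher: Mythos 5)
Your argument is correct, and it takes a genuinely different route from the paper's. The paper states this theorem as a known result of \cite{alperngal1995}; its own justification is the combination of Proposition \ref{prop:noturns} (optimal paths move at unit speed and may turn only at a finite list of event times) with the exhaustive enumeration of the resulting finite candidate set by the recursive procedure of Figure \ref{imp:RDV}, so the lower bound is established by machine-checking every admissible turning pattern. Your proof instead works directly on the full space $\mathcal{F}$: the substitution $u=f-g$, $v=f+g$, the identity $|u'|+|v'|=2\max(|f'|,|g'|)\le 2$, and the observation that $t^{1},\dots,t^{4}$ are precisely the first hitting times of the four thresholds $v=\pm D$, $u=\pm D$ yield, via the variation lemma and a pigeonhole count of which thresholds must have been hit by the $k$-th ordered time, the bounds $t_{(1)}\ge D/2$, $t_{(2)}\ge D$, $t_{(3)}\ge 2D$, $t_{(4)}\ge 3D$, hence $R(f,g)\ge 13D/8$ with no reduction to piecewise-linear paths and no computer search; I checked each case of the accounting and it is sound. (This is essentially the flavour of the improved proofs cited in \cite{alpern2003search}.) What the paper's heavier machinery buys is uniformity: the no-turns-plus-enumeration method extends mechanically to the gift games $G_1$, $G_2^{or}$, $G_2^{and}$, where the game can also end by finding a dropped gift and a clean closed-form threshold accounting like yours is not available; what your argument buys is a short, self-contained, fully rigorous proof for the classical game that does not rest on the completeness of a case analysis or on trusting a program. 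One caution on the ``easy half'': the bracket notation $\bar{f}=[D/2,D/2,D]$, $\bar{g}=[D]$ in the theorem statement is inconsistent with the explicit formulas and figures that follow (under the paper's stated convention the bracketed pair gives $f\equiv g$, so $t^{4}$ is never realized); when you ``substitute the stated paths'' you must use the explicit piecewise formulas, with $\bar{f}$ turning once at $D$ and $\bar{g}$ turning at $D/2$, $D$, $2D$, which do produce the times $D/2$, $D$, $2D$, $3D$ of \eqref{tsubj}.
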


Note that in (\ref{tsubj}) we have introduced the subscripted times $t_{j}$
as the meeting times $t^{i}$ given in increasing order. The duration of the
strategy pair is the final meeting time $t_{4}.$ We now illustrate the
optimal strategies $\bar{f},\bar{g}$ separately and then show how the
solution can be seen by drawing the single path of $I$ ($\bar{f}$) together
with the paths of the four agents of player $II$ ($\pm D\pm g\left( t\right) 
$). We take $D=2$ and draw the paths up to time $t_{4}=3D=6,$ see Figures \ref{fig:fig1alpern} and \ref{fig:fig2alpern}.

\[
\bar{f}\left( x\right) =\left\{ 
\begin{array}{lll}
x\text{ } & \text{if } & x<2 \\ 
2-\left( x-2\right)  & \text{if } & x\geq 2%
\end{array}%
\right. 
\]

\[
\bar{g}\left( t\right) =\left\{ 
\begin{array}{ccc}
t & \text{if} & t<1 \\ 
1-\left( t-1\right)  & \text{if} & 1\leq t<2 \\ 
t-2 & \text{if} & 2\leq t<4 \\ 
2-\left( t-4\right)  & \text{if} & 4\leq t\leq 6%
\end{array}%
\right. 
\]%

\begin{figure}
\includegraphics[scale=0.3]{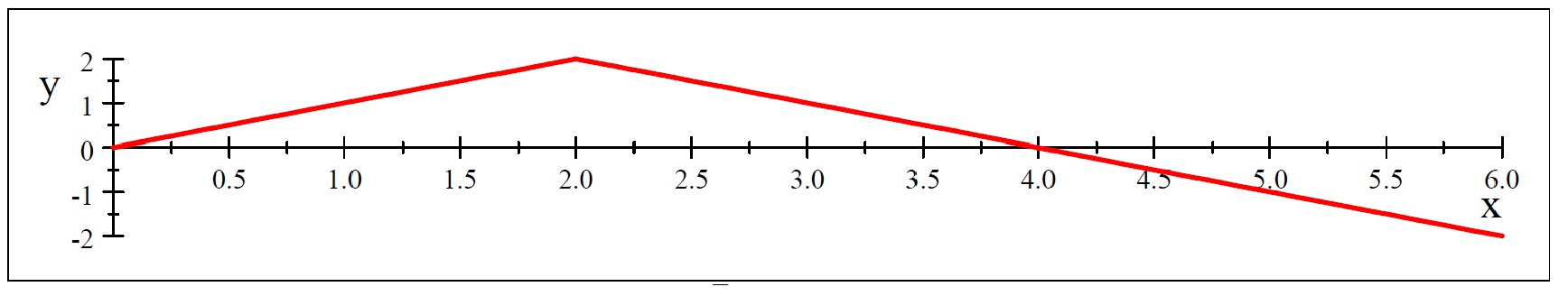}
\caption{Plot of $\bar{f}\left( t\right) $ for $D=2.$}\label{fig:fig1alpern}
\end{figure}

\begin{figure}
\includegraphics[scale=0.3]{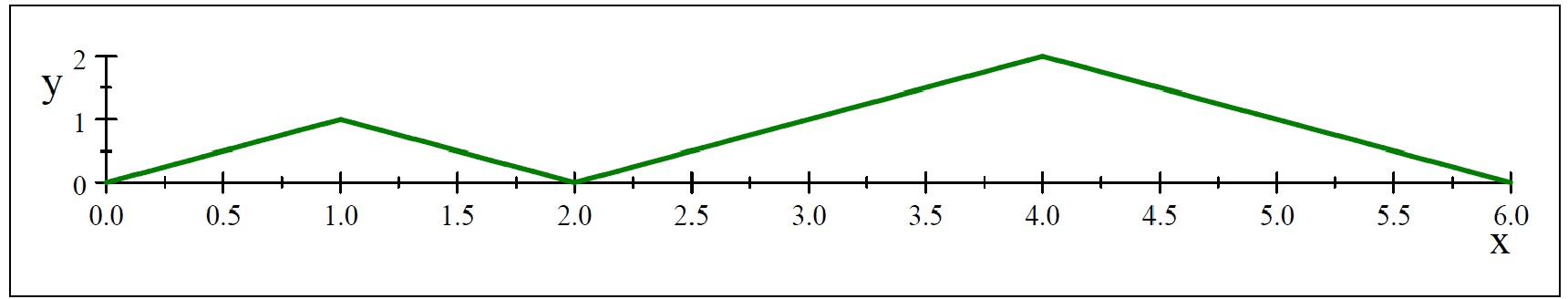}
\caption{ Plot of $\bar{g}\left( t\right) $ for $D=2.$}\label{fig:fig2alpern}
\end{figure}

\section{Upper bounds for new games' solutions}\label{sec:results}

In this section, we present the solutions of the new games that we introduce in this paper, see Table \ref{table:games}. These solutions embodies upper bounds on the rendezvous times.

\begin{theorem}\label{theo:g1} {\bf ($G_1$-game)} A solution for the asymmetric rendezvous problem on the line when one player has a gift, with initial distance $D$, is given by, see Figure \ref{fig:rdvgiftopt}
\[
\bar{f}=\left[ 3/4D\right], \text{ }\bar{g}=\left[D/4; D/4, 3/2D\right].
\]%
The corresponding times are
\begin{equation}
t_1=t^{1}=3/4D,~t_{2}=t^{4}=3/4D,\text{ }t_{3}=t^{2}=3/2D,\text{ }t_{4}=t^{3}=9/4D,
\label{tsubjgift}
\end{equation}%
with Rendezvous Value 
\[
\bar{R}=R\left( \bar{f},\bar{g}\right) =\left( 3/4D+3/4D+3/2D+9/4D\right) /4=21D/16.
\]
\end{theorem}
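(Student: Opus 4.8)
The assertion is an upper-bound statement --- ``a solution is given by'' together with the resulting meeting times --- so the plan is simply a direct verification: write out the path of Player~I and the four agent-paths of Player~II (with the four gift locations), and in each of the four equally likely initial configurations read off when the game ends, namely the \emph{smaller} of the time Player~I meets that agent and the time Player~I first reaches the gift that agent has dropped.

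First I would make the two paths explicit from the turning-point notation of (\ref{equ:turningpoint})--(\ref{equ:dropturningpoint}). The strategy $\bar f=[3D/4]$ means $\bar f(t)=t$ on $[0,3D/4]$ and $\bar f(t)=3D/2-t$ thereafter, so Player~I sweeps the interval $[-3D/4,3D/4]$, reaching $3D/4$ at time $3D/4$ and $-3D/4$ at time $9D/4$. The strategy $\bar g=[D/4;\,D/4,3D/2]$ means $\bar g(t)=t$ on $[0,D/4]$, $\bar g(t)=D/2-t$ on $[D/4,3D/2]$, $\bar g(t)=t-5D/2$ thereafter, with the gift deposited at the location $\bar g(D/4)=D/4$ at time $\tau=D/4$. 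By the reduction of Section~\ref{sec:introresults}, the four agents of Player~II occupy the positions $\pm D\pm\bar g(t)$, and the gift dropped by the agent with a given sign pattern sits at the fixed point $\pm D\pm D/4$, that is at $3D/4,\ 5D/4,\ -3D/4$ or $-5D/4$.

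Then I would dispose of the four cases, following the intersections drawn in Figure~\ref{fig:rdvgiftopt}; in each case the game-ending time is the minimum of a ``meeting'' time (the first $t$ with the relevant one of $\pm\bar f(t)\pm\bar g(t)=D$) and a ``gift'' time (the first $t\ge\tau$ with $\bar f(t)$ equal to that agent's gift location). For the agent $D-\bar g(t)$ (players face each other) one checks $\bar f(t)+\bar g(t)$ never equals $D$, while Player~I passes through the gift at $3D/4$ at time $3D/4$, so $t^{1}=3D/4$. For the agent $D+\bar g(t)$ (both face right) the agent turns at time $D/4$ and is met by the oncoming Player~I at time $3D/4$, before Player~I could reach the gift at $5D/4$ (which he never does), so $t^{4}=3D/4$. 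For the agent $-D-\bar g(t)$ (players face away) the gift at $-5D/4$ is never reached, and a one-line computation ($\bar f(t)+\bar g(t)=-D$ first at $t=3D/2$) gives $t^{2}=3D/2$. For the agent $-D+\bar g(t)$ (both face left) the agent is met only at $t=5D/2$, whereas Player~I reaches the gift at $-3D/4$ at time $9D/4$, so $t^{3}=9D/4$. Averaging, $R(\bar f,\bar g)=(3D/4+3D/4+3D/2+9D/4)/4=21D/16$, and ordering the four values yields (\ref{tsubjgift}).

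The step most in need of care --- and where an error would most easily hide --- is confirming that each reported value is genuinely the minimum of ``meet the agent'' and ``find the gift'': one must verify that in each configuration Player~I does not meet the agent strictly earlier than stated, that the gift is already on the ground when found (here all four times are $\ge\tau=D/4$), and that Player~I does not visit that gift's location at some earlier instant. This last point is automatic, since $\bar f$ is increasing on $[0,3D/4]$ and decreasing on $[3D/4,\infty)$ and therefore passes through each point of $[-3D/4,3D/4]$ exactly once. Note that the matching lower bound --- optimality of $(\bar f,\bar g)$ among all strategies, and over the relevant dropping-time grid --- is not part of this statement; it is supplied later by the rendezvous algorithm of Sections~5--6.
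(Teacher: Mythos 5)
Your verification is correct and takes essentially the same route as the paper, whose own justification consists of exhibiting the strategy pair and reading the four game-ending times off Figure~\ref{fig:rdvgiftopt} (optimality being deferred to the algorithm of Sections~5--6, exactly as you note). The only discrepancy is cosmetic: relative to the definitions of $t^{1},\dots,t^{4}$ your case analysis is the consistent one, whereas the caption of Figure~\ref{fig:rdvgiftopt} swaps the roles of agents $1$ and $4$ at time $3D/4$; since both times equal $3D/4$, nothing is affected.
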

 
\begin{figure}
\includegraphics[scale=0.3]{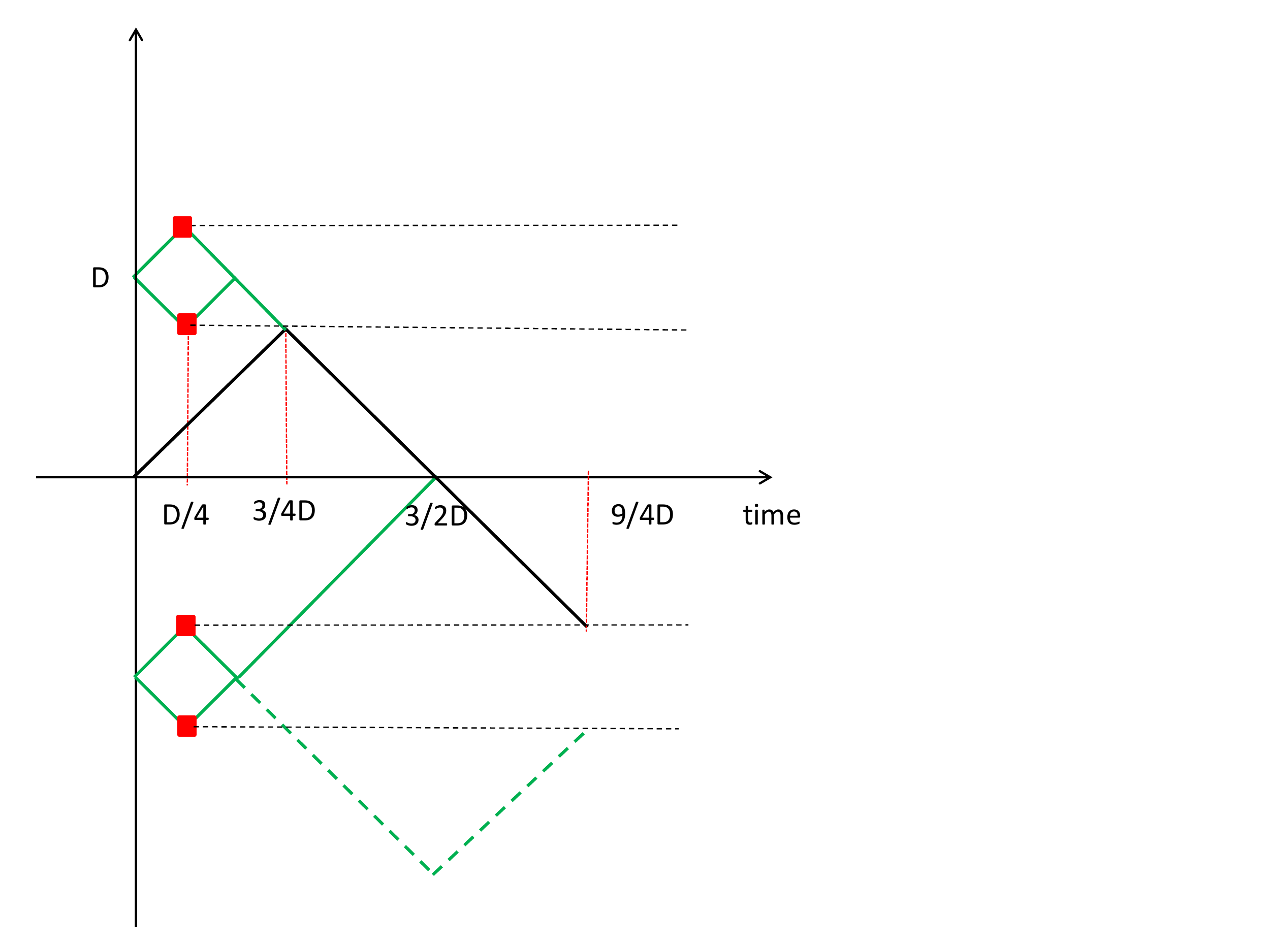}\caption{Solution of the rendezvous problem on the line with one gift ($G_1$-game). On the figure, the gift is dropped off at time $D/4$ by player II. At time $3/4D$ player I finds the gift and this ends the game with agent $4$. At the same time player I rendezvous with agent $1$. At time $3/2D$ player I rendezvous with agent $2$ and finds the gift of agent $3$ at time $9/4D$ ending the game.}\label{fig:rdvgiftopt}
\end{figure}

The next theorem present a solution of the $G_2^{or}$ game where both players have a gift. In this game, player I and agent $i$ must rendezvous or at least one of player I or agent $i$ must find the gift of the other. 

\begin{theorem}\label{theo:g2or} {\bf ($G_2^{or}$-game)} A solution for the asymmetric rendezvous problem on the line when both players have a gift and at least one must be found, with initial distance $D$, is given by, see Figure \ref{fig:rdvgiftor2opt}
\[
\bar{f}=\left[D/2; D/2\right], \text{ }\bar{g}=\left[D/2; D/2\right].
\]%
The corresponding times are
\begin{equation}
t_1=t^{1}=D/2,~t_{2}=t^{4}=3/2D,\text{ }t_{3}=t^{2}=3/2D,\text{ }t_{4}=t^{3}=3/2D,
\label{tsubjgiftor2}
\end{equation}%
with Rendezvous Value 
\[
\bar{R}=R\left( \bar{f},\bar{g}\right) =\left( D/2+3/2D+3/2D+3/2D\right) /4=20D/16.
\]
\begin{figure}
\includegraphics[scale=0.3]{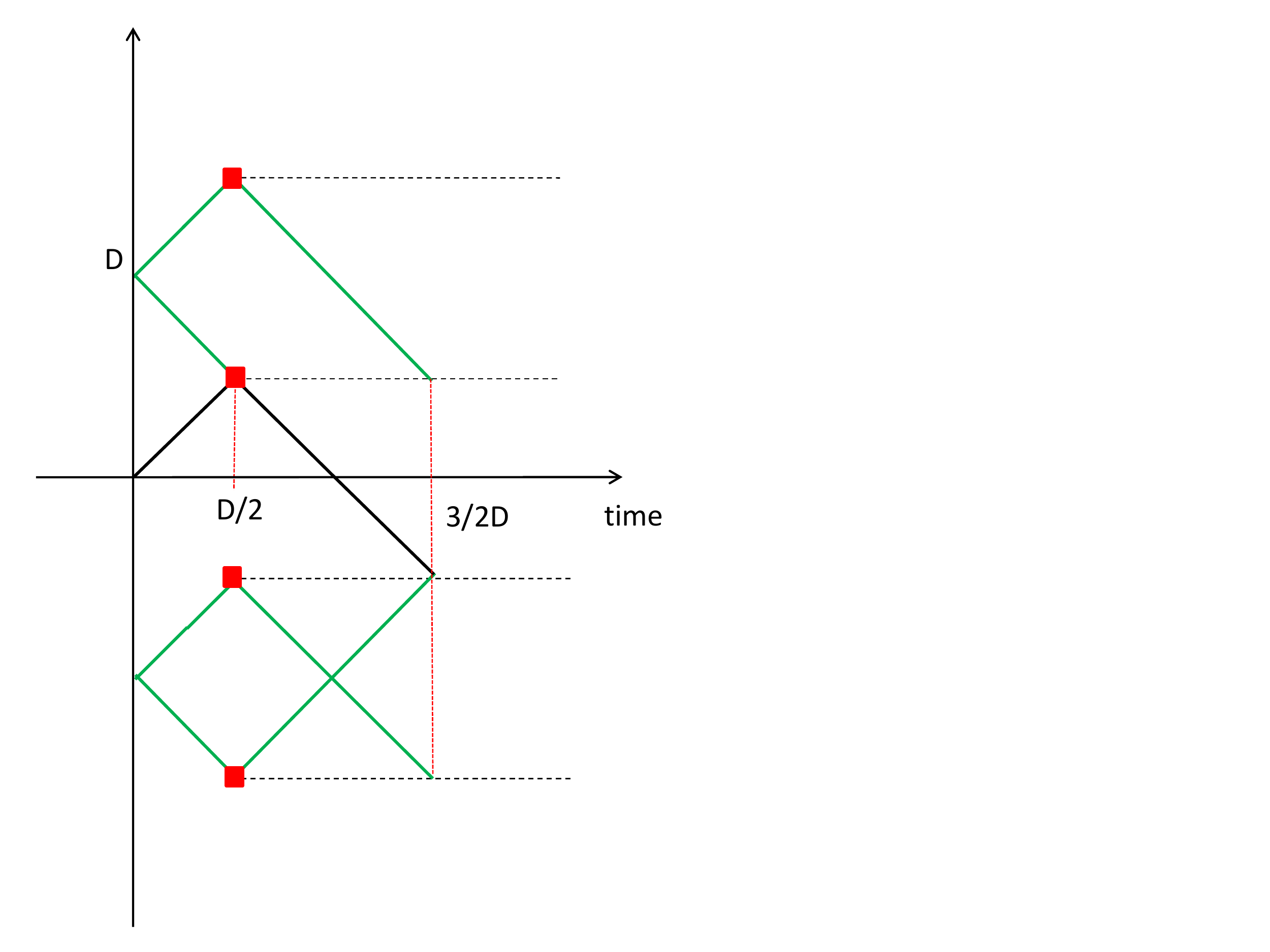}\caption{Solution of the $G_2^{or}$-game. On the figure, the gifts are dropped off at time $D/2$. At time $D/2$ player I rendezvous with agent $1$ (both find the gift simultaneously). At time $3/2D$ player I rendezvous with agent $2$, finds the gift of agent $3$ and agent $4$ finds the gift of player I.}\label{fig:rdvgiftor2opt}
\end{figure}
\end{theorem}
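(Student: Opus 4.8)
The plan is to verify directly that the displayed strategy pair is feasible and that the four game-ending times are as claimed; this establishes $20D/16$ as an upper bound on $\bar R$, the matching lower bound (hence optimality) being obtained later via Proposition \ref{prop:noturns} and the rendezvous algorithm of Sections 5--6. First I would write out the piecewise-linear data. In Player I's normalized frame (Section \ref{sec:introresults}), $\bar f=[D/2;D/2]$ means $\bar f(t)=t$ on $[0,D/2]$ and $\bar f(t)=D-t$ for $t\ge D/2$, and Player I's gift is deposited at time $D/2$ at the point $\bar f(D/2)=D/2$. Since $\bar g=\bar f$, the four agents of Player II follow $\pm D\pm\bar g(t)$, each dropping its own gift at time $D/2$ at the point it then occupies; those four gift points are $3D/2$, $D/2$, $-D/2$, $-3D/2$ for the agents $D+\bar g$, $D-\bar g$, $-D+\bar g$, $-D-\bar g$ respectively.

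Next, for each of the four agents I would find the earliest game-ending event, which in $G_2^{or}$ is the first of: Player I meeting the agent; Player I reaching that agent's gift point at a time $\ge D/2$; or the agent reaching Player I's gift point $D/2$ at a time $\ge D/2$. Each candidate time is the first root of a piecewise-linear equation, obtained by a short case split on $t\le D/2$ versus $t\ge D/2$; and, since $\bar f=\bar g$, the two meeting equations $\pm\bar f(t)\mp\bar g(t)=D$ have no solution at all. The outcome I would confirm case by case is: for the agent $D-\bar g$ (the $t^1$ case) Player I and the agent meet at $t=D/2$, each simultaneously finding the other's gift, so $t^1=D/2$; for the agent $-D-\bar g$ (the $t^2$ case) Player I meets the agent at $t=3D/2$; for the agent $-D+\bar g$ (the $t^3$ case) Player I and the agent never meet and the agent never reaches the point $D/2$, but Player I reaches the agent's gift point $-D/2$ at $t=3D/2$; and for the agent $D+\bar g$ (the $t^4$ case) Player I never meets the agent and never reaches $3D/2$, but the agent reaches Player I's gift point $D/2$ at $t=3D/2$. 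Substituting into $R(\bar f,\bar g)=\tfrac14(t^1+t^2+t^3+t^4)=\tfrac14(D/2+3D/2+3D/2+3D/2)=5D/4=20D/16$ then yields the asserted value, matching (\ref{tsubjgiftor2}).

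The computations are routine; the only thing demanding care is the bookkeeping in the second paragraph. Specifically, for two of the four agents the game ends not by a meeting at all but by a gift being found — and in one of those cases it is the agent that finds Player I's gift rather than the reverse — so all three event types must be checked for each agent, and in each case one must verify that no earlier event of any type intervenes, remembering that a gift only becomes effective from time $D/2$ onward. I do not expect a genuine obstacle here, since the content of this section is merely the upper bound; the difficulty of the problem lies in the lower bound treated subsequently.
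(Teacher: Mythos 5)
Your proposal is correct and follows essentially the same route as the paper: the paper presents this strategy pair in Section~\ref{sec:results} purely as an upper bound, verifying the four ending times exactly as you do (meeting with agent $1$ at $D/2$ where both gifts coincide; meeting with agent $2$, finding agent $3$'s gift, and agent $4$ finding Player I's gift, all at $3D/2$), and defers the matching lower bound to Proposition~\ref{prop:noturns} and Theorem~\ref{theo:two} with the numerical mesh computation. Your case-by-case bookkeeping of the three possible ending events per agent, and your explicit note that the displayed value is only shown here to be an upper bound, accurately reflect the paper's own treatment.
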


The next theorem present a solution of the $G_2^{and}$ game where both players have a gift. In this game, player I and agent $i$ must rendezvous or (both) must find the gift of the other.

\begin{theorem}\label{theo:g2and} {\bf ($G_2^{and}$-game)} Solution for the asymmetric rendezvous problem on the line with two gifts (one for each player), with initial distance $D$  is given by (the value $x$ denotes any time, the gift is not used), see Figure \ref{fig:marker1} as well,

\[
\bar{f_1}=\left[0;D\right], \text{ }\bar{g_1}=\left[0; D\right].
\]
\[
\bar{f_2}=\left[x;3/4D\right], \text{ }\bar{g_2}=\left[D/4; D/4, 3/4D, 7/4D\right].
\]
\[
\bar{f_3}=\left[D/2;D/2\right], \text{ }\bar{g_3}=\left[D/2; D/2, 3/2D\right].
\]
The corresponding times are
\begin{equation}
\begin{split}
&t_1=t^{1}=D/2,~t_{2}=t^{3}=3/2D,\text{ }t_{3}=t^{4}=3/2D,\text{ }t_{4}=t^{2}=5/2D,\\
&t_1=t^{4}=3/4D,~t_{2}=t^{1}=D,\text{ }t_{3}=t^{3}=7/4D,\text{ }t_{4}=t^{2}=5/2D,\\
&t_1=t^{1}=D/2,~t_{2}=t^{2}=3/2D,\text{ }t_{3}=t^{3}=2D,\text{ }t_{4}=t^{4}=2D.
\end{split}
\label{tsubjmarker1}
\end{equation}%
with Rendezvous Value 
\[
\bar{R}=R\left( \bar{f},\bar{g}\right) =24D/16.
\]
\end{theorem}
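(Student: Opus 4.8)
The plan is to verify that the three proposed strategy pairs $(\bar f_i,\bar g_i)$ achieve the claimed meeting times, and hence give $\bar R = 24D/16 = 3D/2$, by a direct case-by-case evaluation exactly as was done for the $G$-game and for Theorems \ref{theo:g1}--\ref{theo:g2or}. Recall that in the $G_2^{and}$ formulation the game against agent $i$ ends at the first time $t$ such that either player I and agent $i$ occupy the same point, or \emph{both} the following have occurred: player I has passed through the drop-off location of agent $i$ (at or after agent $i$'s drop-off time), and agent $i$ has passed through the drop-off location of player I (at or after player I's drop-off time). So for each of the three agent-types I must track two position functions, $f_i(t)$ and the agent path $\pm D \pm g_i(t)$, together with the two drop-off points, and take the minimum over the "meet" time and the "both-gifts-found" time.

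First I would fix $D$ (say $D=16$, matching Table \ref{table:games}, though the computation is homogeneous in $D$) and, for each $i\in\{1,2,3\}$, write out $f_i$ and the four agent trajectories $\varepsilon_1 D + \varepsilon_2 g_i(t)$, $\varepsilon_1,\varepsilon_2\in\{+1,-1\}$, as explicit piecewise-linear functions from their turning-point data \eqref{equ:dropturningpoint}. Next, for the relevant pairing (the theorem tells us which $t^j$ value each agent-type realizes: e.g. for $i=1$ the four outcomes are $t^1=D/2$, $t^3=3D/2$, $t^4=3D/2$, $t^2=5D/2$), I would compute both the first meeting time $\min\{t: f_i(t) = \text{agent}(t)\}$ and the first time both gifts have been found, and check that the minimum of the two equals the tabulated $t_j$. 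The symmetric design — drops at $(0,0)$, $(D/4,x)$, $(D/2,D/2)$ — is what makes these coincide cleanly; I expect most outcomes to be "won" by a gift-find rather than a physical meeting, which is the whole point of the construction. Summing the four times for each of the three cases and averaging gives $24D/16$ in each case, establishing the upper bound $\bar R \le 3D/2$.

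The genuine obstacle is not this verification (which is routine bookkeeping) but the matching lower bound $\bar R \ge 3D/2$, i.e. optimality. For this I would invoke the machinery promised in the introduction: by Proposition \ref{prop:noturns} the turning points of optimal strategies lie in a fixed finite set, so for each fixed vector of drop-off times the optimization reduces to a finite search, and the "rendezvous algorithm" of Section 5 computes the exact value on that finite configuration; then the continuity estimates of Section 6 (Theorems \ref{theo:one}, \ref{theo:two}) let one pass from a grid of drop-off times to all drop-off times, pinning $\bar R$ to the interval $[23.99968,24]$ quoted in Table \ref{table:games}. The subtlety specific to $G_2^{and}$ is that the payoff against each agent is now a \emph{minimum of a meeting time and a conjunction of two hitting times}, so I must check that this more complex stopping rule still fits the hypotheses under which Proposition \ref{prop:noturns} and the algorithm of Section 5 were proved — in particular that "no turns between the finitely many critical events" remains valid when the critical events now include the four gift-related hitting instants for each agent. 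Once that compatibility is confirmed, the lower bound and the localization of optimal drop-off times to the three regions $(0,0)$, $(D/4,x)$, $(D/2,D/2)$ follow from the same computation that produces the table.
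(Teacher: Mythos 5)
Your overall architecture matches the paper's: the theorem as stated in Section \ref{sec:results} is only the upper bound obtained by exhibiting the strategies, and the (approximate) optimality $\bar R\in[23.99968,24]$ for $D=16$ comes from Proposition \ref{prop:noturns}, the finite enumeration of Section \ref{sec:dropknown} for fixed dropping times, and Theorem \ref{theo:two} applied over a mesh of dropping times. Your concern about whether the $G_2^{and}$ stopping rule fits Proposition \ref{prop:noturns} is legitimate and is precisely what case 4 of that proposition is there for, so that part of your plan is sound.

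There is, however, a genuine gap in your verification step. You propose to read $\bar f_i,\bar g_i$ off the turning-point data as \emph{fixed} piecewise-linear paths and then compute, for each agent, the minimum of the first meeting time and the first ``both gifts found'' time. This does not reproduce the claimed times, because the strategies are adaptive: what a player does after an information event (finding the other's gift, or failing to find it where it would lie if the other player were in a given direction) depends on what he observed, and the bracket notation records only the default turning times. Concretely, for $(\bar f_1,\bar g_1)=([0;D],[0;D])$ the literal non-adaptive paths give: agent $3$ (path $-D+g(t)$) and Player I move in parallel after time $D$ and never meet; agent $3$ finds I's gift at time $D$ but I only reaches $-D$ at time $3D$, so the ``both found'' time is $3D$, and similarly $t^4=3D$, yielding $(D/2+5D/2+3D+3D)/4=9D/4$ rather than $3D/2$. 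The claimed $t^3=t^4=3D/2$ arise only because at time $D$ each player learns from the presence or absence of a gift which side the other is on and chooses accordingly whether to turn, after which the pair closes the remaining gap and physically meets at $3D/2$ --- this is exactly the Baston--Gal marker strategy that the caption of Figure \ref{fig:marker1} alludes to. The same issue occurs for $t^1=D$ in $(\bar f_2,\bar g_2)$ (Player I must cancel his scheduled turn at $3D/4$ upon finding agent $1$'s gift) and for $t^3=t^4=2D$ in $(\bar f_3,\bar g_3)$. Note also that in all three solutions every scenario in fact terminates with a physical meeting, the gifts serving purely as information carriers, contrary to your expectation that most outcomes are won by a gift find. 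To repair the verification you must evaluate each strategy pair as a decision tree branching at the information events, which is also how the enumeration in Section \ref{sec:dropknown} treats the strategy space.
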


Interestingly, in this version of the game with two gifts there is a solution that makes use of only of gift, the $(\bar{f}_2. \bar{g}_2)$ strategy pair. We emphasize that the difference with the $G_1$ game is that in the $G_2^{and}$ game both players are aware that the game is finished - they both get some information or gift from the other player. In the $G_1$ game the situation is asymmetric, it may happen that only the player that gets the gift is aware of the end of the game, the information flows in only one direction. Hence, although there is a  solution  for the $G_2^{and}$ game that makes use of only one gift, the solutions to the $G_1$ and $G_2^{and}$ are very different.

\begin{figure} 
\hskip -1cm\includegraphics[scale=0.25]{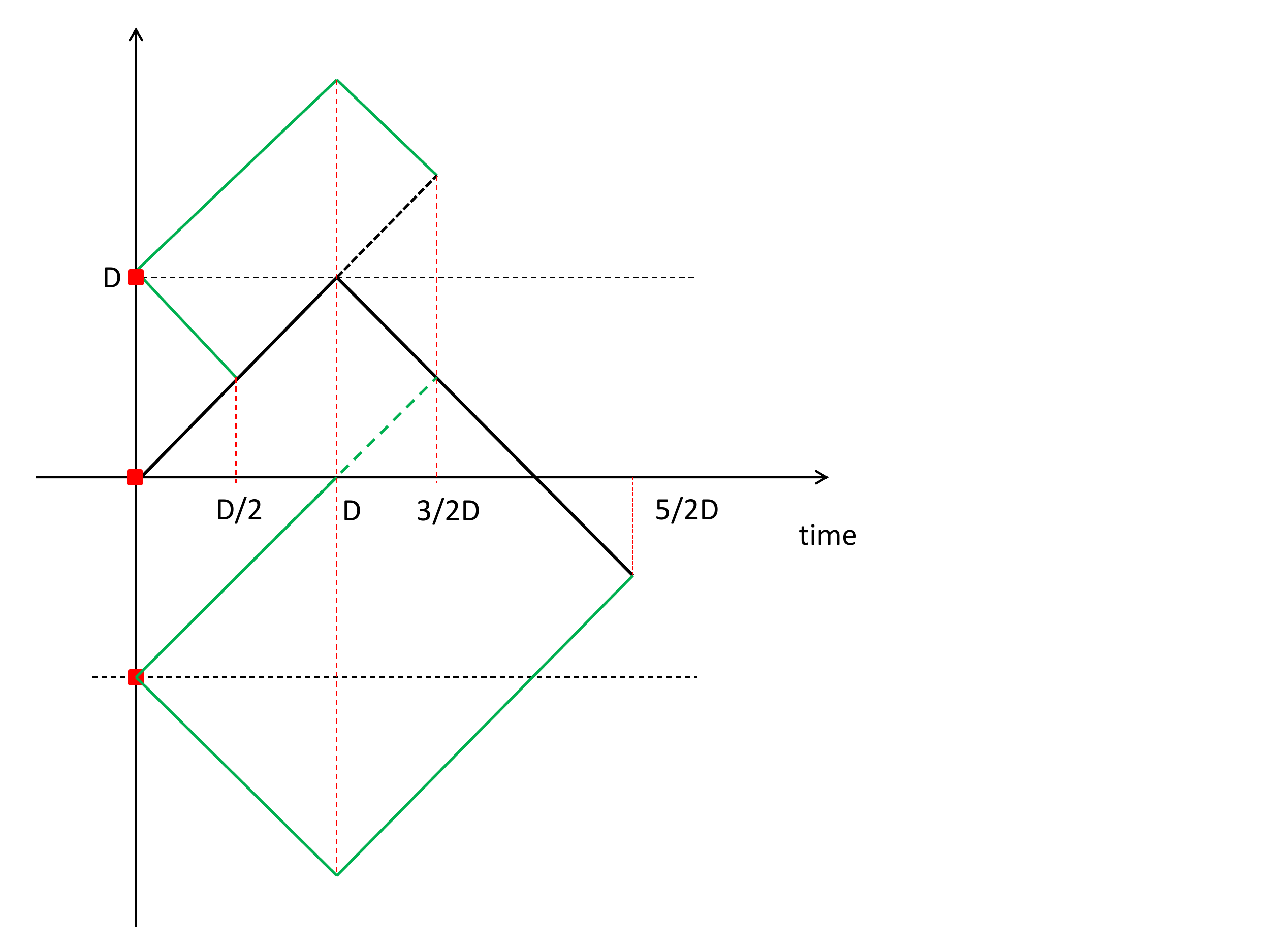}\hskip-2.cm\includegraphics[scale=0.25]{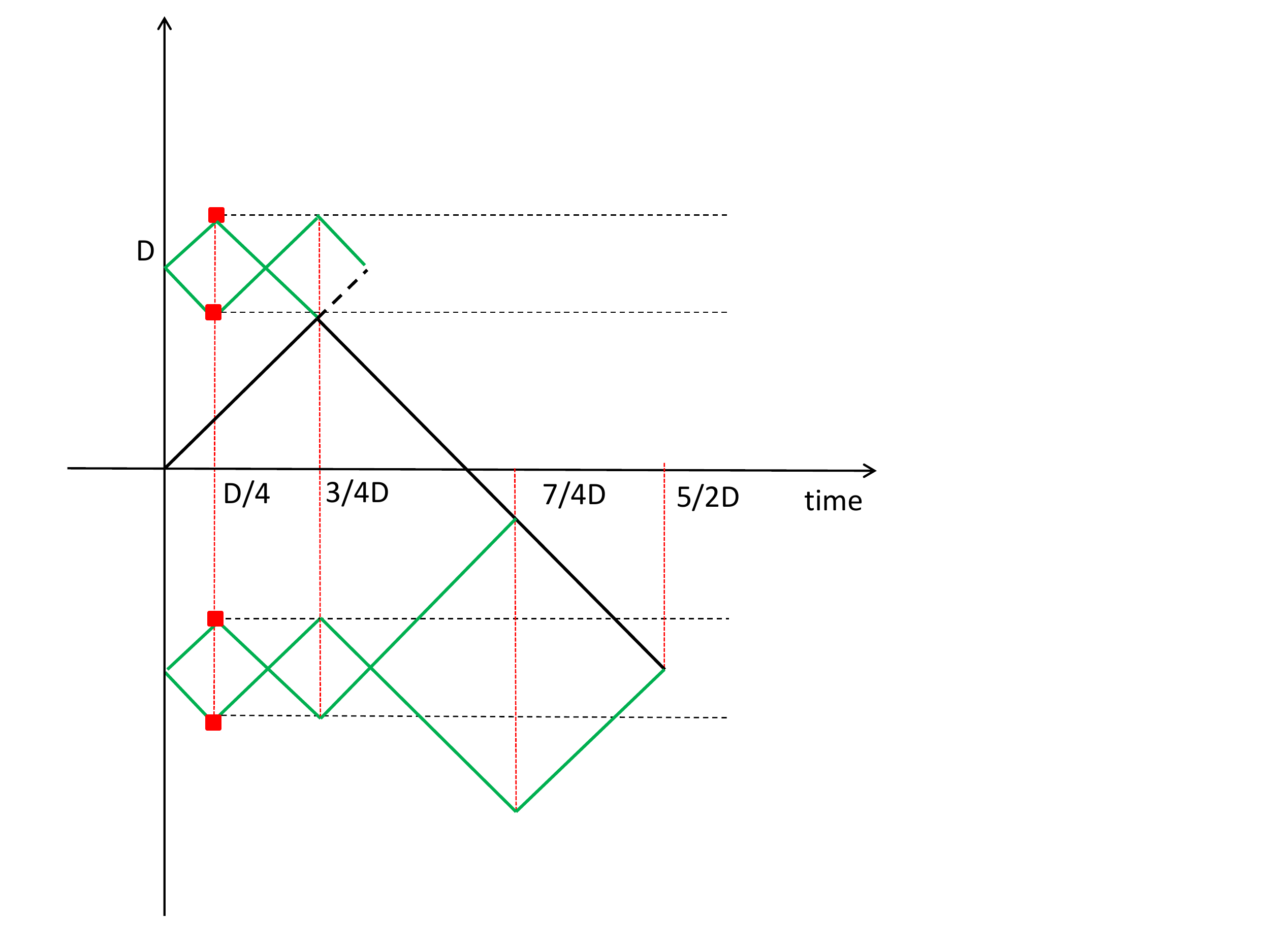}\hskip-2.5cm\includegraphics[scale=0.25]{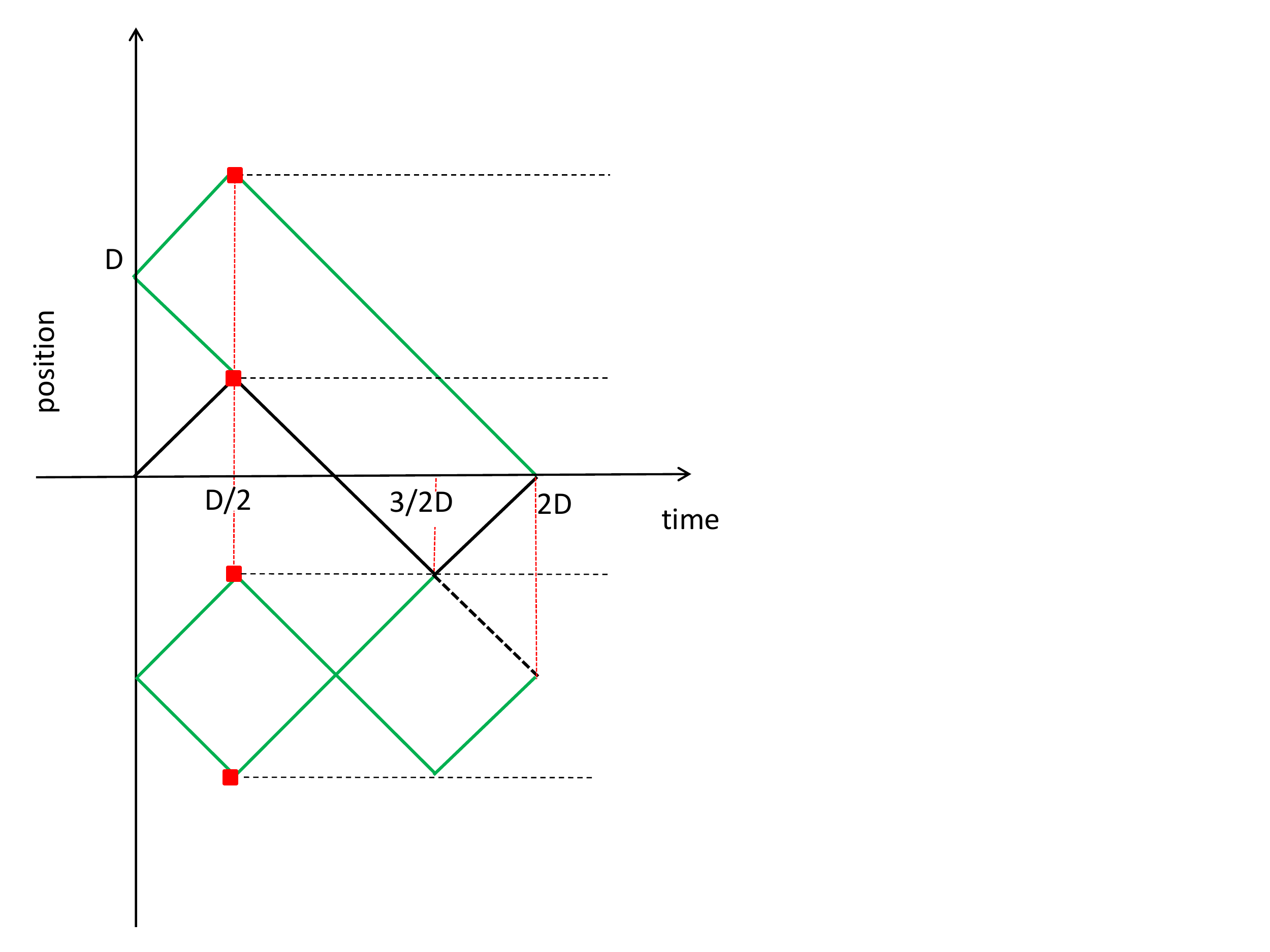}
\caption{Solutions of the $G_2^{and}$-game. From left to right the dropping times are $(0,0), (x,D/4), (D/2, D/2)$. The left solution is already known, see \cite{baston2001rendezvous} that use marks, i.e. a mark is dropped off by a player in order to indicate that he was here. Although marks seem less powerful than gifts we havn't found better solution. In the middle, the gift of player I is not used, the immaterial dropping time is denoted by $x$.}\label{fig:marker1}
\end{figure}

\section{Optimal strategy when the dropping time is known}\label{sec:dropknown}

In this section we show how computing the optimal strategy of the games $G_1, G_2^{or}, G_2^{and} $ if the dropping times are known. First observe that in all solutions presented in section \ref{sec:results} the players move at maximal speed and in direct way toward the location where he finds/drops off a gift or meets the other player. We prove in Proposition \ref{prop:noturns} that strategies that departs from this principles cannot be optimal. Hence, if we know the dropping times there are only a finite set of strategies that are candidate for being optimal (compare with the original strategy space $(\ref{equ:bigspace})$). By testing all elements of the finite set (with a program) we identify optimal strategies.

The following result is a generalization of , Lemma 5.1 of \cite{alperngal1995}
and Theorem 16.10 of  \cite{alpern2006theory}.

\begin{proposition}\label{prop:noturns}
Let $G$ be any asymmetric rendezvous game on the line where each player has at
most one gift. Then in any Nash equilibrium (NE) for $G$ (and in particular at
any optimal strategy pair) each player moves at unit speed in a fixed
direction (no turns) on each of the time intervals $J$ determined by times
$0,$ and the following times $c$:

\begin{enumerate}
\item The meeting times $c=t^{i}$ when he meets the agent $i$ of the
other player. 

\item The meeting times $c=t^{i}$ when he finds the gift dropped by agent
$i.$ 

\item The time $c=\tau$ that he drops off a gift which is later found (if he
has a gift) 

\item The times $\,c=t$ when he finds a gift dropped by agent $i$, who at a
later time $t^{i}$ finds the gift the he himself ($I$) has dropped, and for which
there is a time of types 1 - 3 later than t. (Note: this case only
occurs when both players have a gift and rendezvous requires a meeting or
that both gifts are found.)
\end{enumerate}
\end{proposition}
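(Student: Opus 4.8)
The plan is to argue by a local exchange (interchange) argument: assume a Nash equilibrium strategy pair is given, suppose for contradiction that one player — say $I$ with path $f$ — makes a turn strictly inside one of the intervals $J$ determined by the listed times, and construct a variation $\tilde f$ that weakly decreases all four meeting times $t^i$ (and strictly decreases at least one), contradicting optimality. Since the rendezvous value is the average of the $t^i$, it suffices to show every meeting time is nonincreasing under the variation while one drops. The variation I would use is the standard one for line-search problems: near a turning point $f_j$ that lies strictly inside an interval $J$, replace the little ``spike'' by having the player continue a bit further (or turn a bit earlier), i.e. slide the turning time; because no game-relevant event (a meeting, a gift-find, or this player's own dropoff) happens during the open interval $J$, the player's exact position in the interior of $J$ is irrelevant to whether any meeting has yet occurred, so we have freedom to reshape $f$ there.

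First I would set up notation carefully: fix the equilibrium pair, list for player $I$ the finitely many ``critical'' times of types 1--4, and let $J=(a,b)$ be one of the complementary open intervals on which $f$ is claimed to be not monotone; pick a turning point $f_j\in J$. Second, I would classify how each of the four agents $i$ interacts with $I$ after time $b$: for each $i$ the game with agent $i$ ends either by a meeting at some $t^i$, or by $I$ finding agent $i$'s gift, or by agent $i$ finding $I$'s gift — and crucially, by the definition of $J$, none of these terminal events for any $i$ occurs during $(a,b)$ itself. Third, I would define $\tilde f$ to agree with $f$ outside a neighborhood of $f_j$ inside $J$, and to ``straighten'' the path there so that at the right endpoint of the modified region $\tilde f$ is at least as far in the direction needed for each still-pending meeting; because the terminal event for agent $i$ is located outside $(a,b)$, moving $I$ monotonically toward it on $J$ cannot delay it and typically advances it. Fourth, I would check the four cases of the proposition separately — the only subtle one is case 4, where $I$ finds agent $i$'s gift at a time $c\in J$ and later agent $i$ finds $I$'s gift; there the ``event at $c$'' is $I$ picking up a gift, which does not by itself end that agent's game, so $c$ is still a legitimate interval endpoint and the interior of the adjacent interval remains free — this is exactly why type-4 times must be included in the list. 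Finally, I would note the symmetric argument applies to player $II$ (with the added dropoff-time constraint of type 3, since after $II$ drops his gift the gift's position is frozen and $II$'s later position still matters for direct meetings).

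The main obstacle I anticipate is bookkeeping the ``direction needed'' for the several agents simultaneously: a single variation of $f$ on $J$ must not hurt \emph{any} of the four meeting times, and the four agents may pull $I$ in opposite directions. The resolution is that on the interior of $J$ no meeting is in progress, so for each agent $i$ the ``next'' terminal event is a single well-defined point (a meeting location, or the location of a dropped gift) reached after $b$; monotone motion on $J$ toward the eventual configuration — formally, replacing $f$ on $[a',b']\subseteq J$ by the unique $\pm1$-slope path with the same endpoints that is monotone — can only weakly decrease each $\min\{t:\ \pm f(t)\pm g(t)=D\}$-type quantity, because shrinking the ``wasted'' back-and-forth distance on $J$ shifts the whole tail of $f$ earlier (or leaves it unchanged) without ever creating an earlier spurious meeting inside $J$. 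Once that monotonicity-of-each-meeting-time claim is established, the contradiction with the NE/optimality of the averaged value is immediate, and a strict improvement follows whenever the removed spike actually changes the arrival time at the first pending terminal event. I would present the single-spike removal in detail and remark that a general non-monotone path on $J$ is handled by iterating, or directly by replacing $f|_J$ with its monotone ``shortcut.''
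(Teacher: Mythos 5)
Your overall strategy (a local shortcut of the path on an interval $J$ that is free of critical events, followed by a check that no meeting time increases) is the same as the paper's, and your explanation of why the type-4 times must appear in the list of interval endpoints is correct. But the central claim your argument rests on is false: removing a spike on $J$ and letting ``the whole tail of $f$ shift earlier'' does \emph{not} weakly decrease every meeting time. Being ahead of schedule can be strictly harmful in rendezvous on a line. Concretely, suppose agent $i$ descends toward Player I, they meet at time $t^{i}=c$ at the point $f(c)$ with $f$ increasing just before $c$, and in the original strategy $f$ turns downward at $c$. If the tail is advanced by $e$, then at time $c-e$ the modified Player I is already at $f(c)$ and turns downward, while agent $i$ is still at distance $e$ above him and also moving downward at unit speed; the gap remains equal to $e$ and the meeting with agent $i$ is postponed (possibly for a long time), not advanced. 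So the sentence ``monotone motion on $J$ toward the eventual configuration \dots can only weakly decrease each $\min\{t:\pm f(t)\pm g(t)=D\}$-type quantity'' is exactly the step that fails. A smaller but related slip: the ``unique $\pm 1$-slope monotone path with the same endpoints'' that you invoke does not exist when $\left\vert f(b^{\prime })-f(a^{\prime })\right\vert <b^{\prime }-a^{\prime }$, which is precisely the situation after a spike is removed, so you are forced either to wait somewhere or to shift the tail, and you chose the wrong option.

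The paper's proof avoids this trap by never shifting the tail: the modified player takes the shortcut on $J$, arrives at $f(c)$ at the earlier time $c-e$, and then \emph{spends the saved time locally} so as to be back at $f(c)$ at time $c$ and resume the original path, leaving every other meeting time unchanged. The saved time is cashed in differently according to the type of the endpoint $c$: if $c$ is a meeting time, the player makes an out-and-back excursion of duration $e$ in the approaching agent's direction, which by an intermediate-value argument advances that single meeting to some time $\leq c-e/2$; if $c$ is a gift-finding time, he simply waits at $f(c)$ until $c$ (or reduces to the meeting case if the gift has not yet been dropped when he arrives); if $c$ is his own dropoff time, he drops the gift early and then reduces to one of the later cases. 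Your proposal needs this local ``wait or out-and-back'' device, together with the four-way case analysis on what happens at $c$, to replace the incorrect tail-shifting step; once that is in place the rest of your outline (iterating over spikes, symmetry between the two players) goes through.
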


\begin{proof}
Assume on the contrary that for some NE strategy pair $\left(  f,g\right)  ,$
Player $I$ (say) fails the condition on some time interval $J=\left[
b,c\right]  $ of the asserted type. Suppose his path is given by $f\left(
t\right)  .$ There are three cases, depending on the what happens at time $c.$

\begin{description}
\item[1.] \textbf{At time }$c=t^{i}$\textbf{ Player I first meets agent }$i.$
Since the stated condition fails on $J,$ Player I can modify his strategy
inside the interval $J$ so that he arrives at the meeting location $f\left(
c\right)  $ at an earlier time $c-e.$ At time $c-e,$ agent $i$ of player II is
either at location $f\left(  c\right)  $ or lies in some direction (call this
$i$'s direction) from $f\left(  c\right)  .$ In the former case the meeting
with $i$ is moved forward to time $c-e.$ So Player I can stay there in until
time $c$ and then resume his original strategy, so all other meeting times are
unchanged. Otherwise, Player I goes in $i$'s direction at unit speed on
interval $\left[  c-e,c-e/2\right]  $ and then back to $f\left(  c\right)  $
at time $c,$ when he resumes his original strategy. This brings the meeting
time with $i$ no later than $c-e/2,$ without changing any other meeting times,
lowering the expected meeting time. In either case the expected rendezvous
time is lowers, contradicting the assumption that $f$ was an optimal response
to $g.$

\item[2.] \textbf{At time }$c=t^{i},$\textbf{ Player }$I$\textbf{ first finds
the gift dropped by agent }$i.~ $If the gift has just been dropped off at
time $c,$ then $I$ also meets agent $i$ at time $c,$ so the previous case
applies. Similarly, the previous case applies if the gift is not present at time $c-e$ when player I can reach the position $f(c)$.
Otherwise,  Player $I$
modifies his strategy (path) on $J$ so that he arrives at $f\left(  c\right)
$ at time $c-e$ , waits there until time $c,$ and then resumes his original
strategy $f.$ Then he finds the gift dropped by $i$ at by time $c^{\prime}$
rather than time $c,$ while all other meeting times are unchanged. This
contradicts the assumption that $f$ was a best response to $g.$

\item[3.] \textbf{At time }$c=\tau,$\textbf{ Player }$I$\textbf{ drops off a
gift found later at time }. Suppose  Player I modifies $f$ to get earlier to the dropoff location $f(c)$ at time $c-e$, drops off the gift, and then stay still until time $c$, and resumes with the original strategy $f$. After time $c-e$, case \textbf{1.} or \textbf{2.} occurs. For, if not Player I must go and meet the agent that finds the gift at a sooner time. Hence, in the interval $[c-e, t^i]$ the strategy can be further refined.This contradicts the assumption that $f$ was a best response.

\item[4.] \textbf{At time }$c$\textbf{ Player I finds a gift dropped by
agent }$i$\textbf{ who at a later time }$t^{i}$\textbf{ finds a gift dropped
by Player I. Furthermore there is a later time of type 1 - 3}. The modification of $f$ is the same as in the previous case, $\hat{f}$ simply goes
straight to $f\left(  c\right)$. If the gift is not yet present at location $f(c)$ the strategy is modified as in case \textbf{2.}, and Player I meets the agent $i$ in the interval $[c-e,c]$. Else, Player I gets the gift at time $c-e$, waits there until time $c$ and resumes with the strategy $f$. The modified strategy can modified following case 1-3 that occurs after time $c$. This contradicts that $f$ was a best response to $g$.
 
\end{description}
\end{proof}

\begin{corollary}\label{cor:noturns} Optimal strategy pair $(f,g)$ in the games $ G, G_1, G_2^{or}, G_2^{and} $ admit a representation as in $(\ref{equ:turningpoint}),(\ref{equ:dropturningpoint})$. In particular, there are only a finite set of strategies candidate for optimality.
\end{corollary}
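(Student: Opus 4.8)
The plan is to read both assertions off Proposition \ref{prop:noturns}, with the dropoff time(s) held fixed. For the representation claim I would take an optimal, hence Nash, strategy pair $(f,g)$ for one of $G$, $G_1$, $G_2^{or}$, $G_2^{and}$ and apply Proposition \ref{prop:noturns}: on each interval of the partition of $[0,T]$ cut out by $0$ together with the event times of types 1--4, each player's path is a unit-speed straight segment, so each path is continuous, piecewise linear with slopes $\pm 1$, and changes slope only at those finitely many times. A path of exactly this shape is what (\ref{equ:turningpoint}) records: list its slope-flip times in increasing order as $[f_1,\dots,f_k]$ (with repeats allowed, to encode a player who starts in the $-1$ direction or has an empty reversal), and adjoin the dropoff time $\tau$ as in (\ref{equ:dropturningpoint}) when that player carries a gift. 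This part is essentially immediate once Proposition \ref{prop:noturns} is in hand.

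For finiteness I would first bound the number of turning points. Every turning point of a given player sits among that player's event times of types 1--4, and these are few: at most four ``game-ending'' times $t^i$ (one per opposing agent, a meeting or the opponent's gift found), at most one own dropoff time, and a bounded number of times of type 4 (each attached to a distinct agent). Hence $k \le K$ for an absolute constant $K$, so each candidate path is piecewise linear of slope $\pm 1$ with at most $K$ pieces, determined by the ordered list of its $\le K$ turning points together with its initial direction. I would then attach to a candidate pair a finite \emph{combinatorial type}: for each player, its initial direction and which of its ordered event-slots are genuine turning points; and for each of the $\le 4$ game-ending events, the sign pattern in $\pm f(t) \pm g(t) = D$ that defines it and the affine segments of $f$ and $g$ active just before it. There are finitely many types, and within a type the event times solve a triangular system of scalar linear equations: taken in increasing order, each $t^i$ (and each type-4 time) is the first root of one linear equation whose coefficients come from $D$ and the quantities already determined. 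Such a system has at most one consistent solution; a degenerate type that leaves an affine family of solutions can be discarded, since $R(f,g) = \tfrac{1}{4}\sum_i t^i$ is affine along the family and its minimum is attained at a vertex of the admissible polyhedron, which is itself the solution of an enlarged, still finite, linear system. So each type contributes at most one strategy pair relevant to optimality, finitely many types give a finite candidate set, and enumerating it (as the numerical work in the later sections does) locates the optimum.

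The step I expect to be the genuine obstacle is the last one: the constraint supplied by Proposition \ref{prop:noturns} is self-referential, because turning points must sit at event times while the event times are themselves functions of the two paths, and one must argue this implicit condition cannot hide a continuum of solutions. My resolution would be exactly the causal/triangular structure above --- the event times get pinned down one at a time, each by a single linear equation in already-fixed data --- backed up by the affine-minimization remark for degenerate types. A lesser nuisance worth a sentence is the possibility of a non-isolated meeting (a whole time interval of simultaneous rendezvous with some agent) or of an equation that holds identically for a type; these either leave the rendezvous value unchanged or are absorbed into a neighbouring type, so they do not enlarge the essential candidate set.
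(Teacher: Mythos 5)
Your proposal is correct and follows essentially the same route as the paper: both deduce the piecewise-linear form directly from Proposition \ref{prop:noturns} and obtain finiteness by noting there are boundedly many event times, a binary turn/continue choice at each, and that the event times are then determined recursively (your ``triangular system'' is exactly the paper's ``the time interval between any two such events is deterministic and can be computed,'' implemented forward-in-time in the enumeration procedure of Figure \ref{imp:RDV}). Your extra care about degenerate types and non-isolated meetings is a refinement the paper's terser proof omits, not a different argument.
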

\begin{proof}
By proposition \ref{prop:noturns} players move at full speed and turning points must coincide with location where the player drops/finds a gift   or meets the other player. The time interval between any two such events is deterministic, depending on the configuration and can be computed. If such an interval corresponds to a turning point then it leads to the definition of the $f_i$ in $(\ref{equ:turningpoint}),(\ref{equ:dropturningpoint})$. The total set of strategies in the form $(\ref{equ:turningpoint}),(\ref{equ:dropturningpoint})$ can be listed by selecting at each event either to change the direction of continue in the same direction. Because there is a finite set of such events and a finite set of directions there is a finite set of strategies in the form $(\ref{equ:turningpoint}),(\ref{equ:dropturningpoint})$.
\end{proof}

In the following we provide two examples of exact solutions whose computation is possible because of Proposition \ref{prop:noturns} (Corollary \ref{cor:noturns}).

In Figure \ref{imp:RDV} we show a procedure that enumerates all strategies in the form $(\ref{equ:turningpoint})$, solving the game $G$. The state of the system is described by the distance vector $(d_1, d_2, d_3, d_4 ,t)$ where $d_i$ is the distance for player I to agent $i$ and $t$ the current time.

 \begin{figure}
 \begin{flushleft}
{\bf procedure} CheckAllStrategy($d_1,d_2,d_3,d_4,t$)\\
$\{$\\
$\quad${\bf if} ($d_1>0$)\\
$\quad$$\quad$ print("player I : F   player II : B time : t")\\
$\quad$$\quad$CheckAllStrategy($0,d_2, d_3\widetilde   +1 ,d4,t+\frac{d_1}{2}$)\\
$\quad${\bf if} ($d_2>0$)\\
$\quad$$\quad$ print("player I : F   player II : F time : t")\\
$\quad$$\quad$CheckAllStrategy($d_1,0,d_3, d_4\widetilde  +1 ,t+\frac{d_2}{2}$)\\
$\quad${\bf if} ($d_3>0$)\\
$\quad$$\quad$ print("player I : B   player II : F time : t")\\
$\quad$$\quad$CheckAllStrategy($ d_1\widetilde  +1,d_2,0,d4,t+\frac{d_3}{2}$)\\
$\quad${\bf if} ($d_4>0$)\\
$\quad$$\quad$ print("player I : B   player II : B time : t")\\
$\quad$$\quad$CheckAllStrategy($d_1,d_2\widetilde +1,d_3,0,t+\frac{d_4}{2}$)\\
$\quad${\bf if} ($d_1+d_2+d_3+d_4==0$) \\
$\quad$$\quad$print("final time t")\\
$\}$
\end{flushleft}
 \caption{Pseudo-code of the program that enumerates candidate for optimal strategies of $G$. The recursive procedure prints out all the checked strategies and the total time. We use an operator denoted $\widetilde  +$ , i.e.  $x\widetilde + y$ which semantics is: {\bf if } $x>0$ {\bf then } $x+y$ {\bf else } 0. }\label{imp:RDV}\end{figure}

Initially $(d_1, d_2, d_3, d_4,t)=(1,1,1,1,0)$ and player I has met with agent $i$ if $d_i=0$, in which case the value of $d_i$ is frozen in the subsequent computation (this is why we introduced the operator $\widetilde +$). At the time where turning is an option by Proposition \ref{prop:noturns}, the algorithm checks whether a direction makes sense and, if yes, try it. Indeed, there are situations where going in a direction makes no sense, for instance player I cannot go forward if agents $1$ and $4$ have met already with player I. Figure \ref{eq:transition} summarizes all meaningful direction choices given the current state of the system $(d_1, d_2, d_3, d_4, t)$ as well as the new state reached after the motion. A systematic exploration of all trategies that that are eligible for optimality can be implemented by a recursive procedure, see Figure \ref{imp:RDV}. Execution of this program provides a new proof of the optimal strategy of $G$ in \cite{alperngal1995}.

The programs for solving the other games are very similar. Few more cases are allowed. For instance, for a player moving in  any direction always makes sense provided there is a player in the direction and independently of the moving direction of this player. For, player I going in the forward direction makes always sense provided agent 1 or 2 is not yet found. Actually, even if the agent goes the wrong direction, player I can still go for the gift (symmetric situations admit the same argument).

\begin{figure}
\begin{align}\label{eq:transition}
&(d_1,d_2,d_3,d_4,t) \longrightarrow_{d_1>0} (0,d_2\widetilde +d_1/2,d_3,d4,t +d_1/{2})~~\text{    direction choice : }\begin{pmatrix} F \\ F\end{pmatrix}\nonumber\\
&(d_1,d_2,d_3,d_4,t) \longrightarrow_{d_4>0} (d_1,d_2,d_3\widetilde +d_4/2,0,t+{d_4}/{2})~~\text{    direction choice : }\begin{pmatrix} F \\ B\end{pmatrix}\nonumber\\
&(d_1,d_2,d_3,d_4,t) \longrightarrow_{d_3>0} (d_1,d_2,0,d_4\widetilde +d_3/2,t+{d_3}/{2})~~\text{    direction choice : }\begin{pmatrix} B \\ F\end{pmatrix}\nonumber\\
&(d_1,d_2,d_3,d_4,t) \longrightarrow_{d_2>0} (d_1\widetilde +d_2/2,0,d_3,d_4,t+{d_2}/{2})~~\text{    direction choice : }\begin{pmatrix} B \\ B\end{pmatrix}\nonumber
\end{align}\caption{State transitions. On the right the correspondig direction choices, the upper entry of the matrix is player I and the lower is player II. The operator $\widetilde +$ semantics is:$\text{ \bf if } x>0 \text{ \bf then } x+y \text{ \bf else } 0$. }\label{eq:transition}
\end{figure}

\begin{figure}
\includegraphics[scale=0.3]{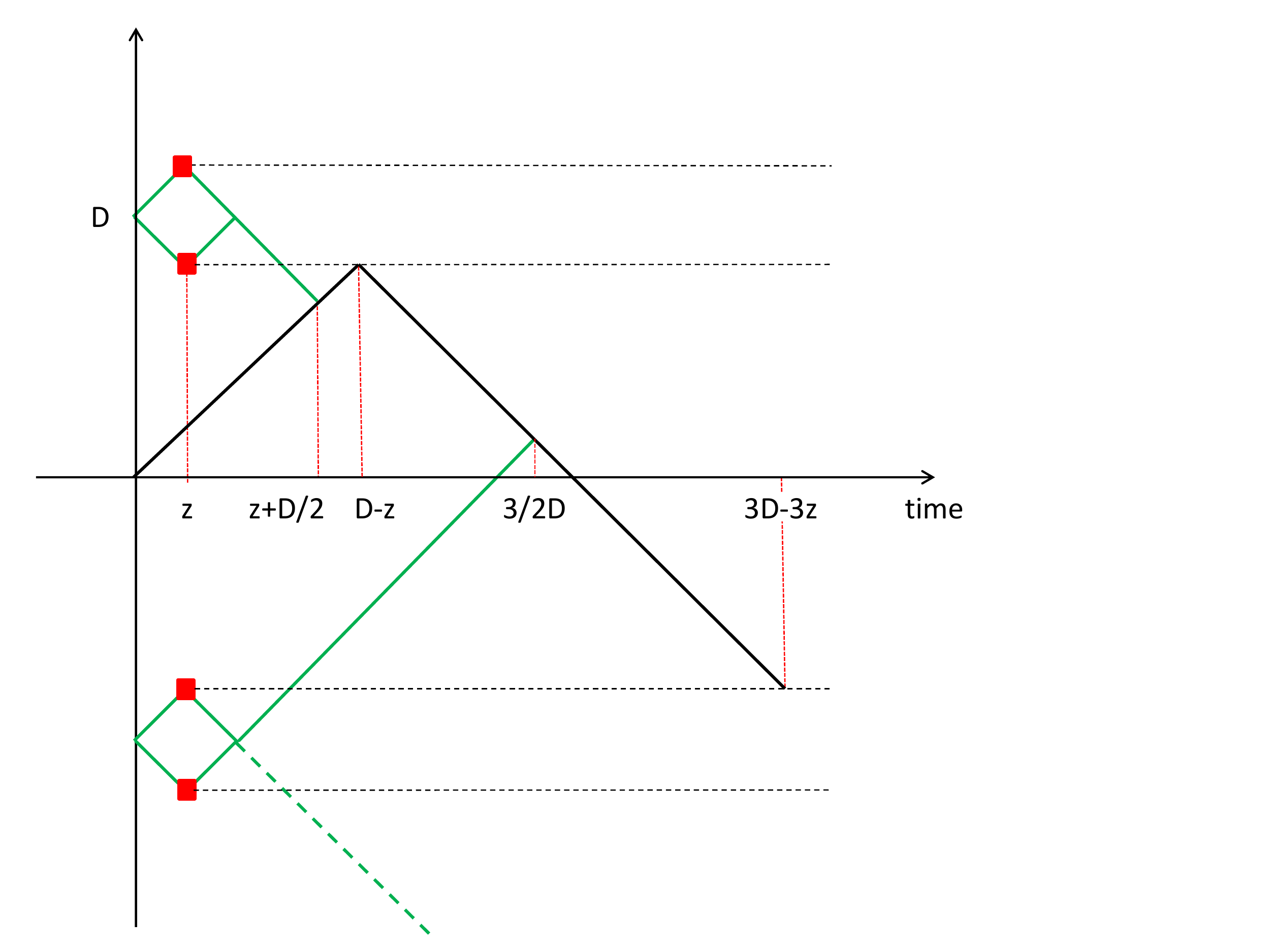}\caption{Optimal solution in $G_1$ when the dropping time is $z<1/4$. Optimal strategies are $\bar{f}=[D-z]$, $\bar{g}=[z;z]$}\label{fig:optz}
\end{figure}

To conclude this section we can illustrate how to enumerate all solution by considering the game $G_1$ when the dropping time is $z<1/4$, see Figure \ref{fig:optz}. At the onset player I and II are always going in the forward direction by symmetry. Because $z<1/4$, player II drops off the gift before any other event. Hence after time $z$ there are two strategies for player II, either it is a turning point either not.  By direct inspection with a program we checked that not turning at time $z$ is not optimal. Player I meets agent 4 at time $z+d/2$ and now both player I and II are allowed to change direction. We again checked by direct inspection that the best strategy is that player I turns and player II not. Continuing in this way, considering turning points only when this is allowed by Proposition \ref{prop:noturns} we are able to generate all strategies and select the optimal one. In this case the best strategy for player I is $[D-z]$ while for player II it is $[z;z]$.


\section{Bounding the optimal solutions}\label{sec:boundingsol}

Given the algorithm described in section \ref{sec:dropknown} a way a obtaning reasonnable knowledge about the optimal solution of one of the games is to define a mesh of values that corresponds to the dropping times. For each dropping time we compute the exact optimal solution of the game. Computing the minimal values leads to an upper bound. In the following we show how the game values that are not computed (the dropping time does not belongs to the mesh) can be bounded.

For the game $G_1$, we denote $x(l)$ the optimal game value computed by direct inspection when the dropping time is $l$. Proposition \ref{boundingRDVLetter} shows that $x(l)$ and $x(l+\alpha)$ and related by $x(l)\ge x(l+\alpha)-\alpha,~\alpha\ge 0$. Hence, if the value $x(l+\alpha)$ is computed it can be used to bound the exact solution for $x(l)$ and, if this bound is larger that upper bound computed (see Section \ref{sec:dropknown} ) the optimal solution cannot be with dropping time in the interval $[l, l+\alpha]$. This result is stated in Theorem \ref{theo:one}. Proposition \ref{continuity1} shows that the function $x(l)$ is indeed continuous, see Figure \ref{fig:giftfig}.

For the games $G_2^{or}, G_2^{and}$ the derivation of the bound is similar but requires few extra work. The result is stated in Theorem \ref{theo:two}. The analogue of Proposition \ref{boundingRDVLetter} is Proposition \ref{boundingRDVLetters}. In this case there are two dropping times and delaying only one dropping time without modifying the other is easy only if the delayed dropping time is the latest. This is the reason why in Proposition \ref{boundingRDVLetters} we distinguish $l_1\ge l_2$ or $l_1\le l_2$. Corollaries \ref{cor:bound1}, \ref{cor:bound2} and \ref{cor:bound3} specialize the results of Proposition \ref{boundingRDVLetters} to identify the region of dropping times that cannot lead to optimal solution. A direct application of these corollaries leads to Theorem \ref{theo:two}, this is illustrated on Figure \ref{fig:excluded}.

\begin{proposition}\label{boundingRDVLetter}(Bounds in $G_1$) We denote $x(l)$ the value of $G_1$ if the gift is dropped off at time $l$. For $\alpha>0$ we have that
\begin{equation}\label{eq:boundRDVLetter}
x(l)\ge x(l+\alpha)-\alpha.
\end{equation}
\end{proposition}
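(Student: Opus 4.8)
The plan is to show that any strategy pair achieving value $x(l+\alpha)$ in the game where the gift is dropped at time $l+\alpha$ can be converted into a strategy pair for the game with dropoff time $l$ whose rendezvous value is at most $x(l+\alpha)-\alpha$... wait, that inequality goes the wrong way, so in fact I want to do the reverse: take an optimal pair for dropoff time $l+\alpha$ and build from it a pair for dropoff time $l$ that is no worse than $x(l+\alpha)-\alpha$. Actually the cleanest direction is the one suggested by the statement rewritten as $x(l)+\alpha \ge x(l+\alpha)$: I would start from an optimal pair $(\bar f,\bar g)$ for dropoff time $l$ and construct a pair for dropoff time $l+\alpha$ whose expected meeting time exceeds $x(l)$ by at most $\alpha$. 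The construction is a time-delay trick applied to the gift-carrying player (Player II / the agents). Let Player II use the \emph{same geometric path} $\bar g$ but reparametrised so that he pauses (or, since pausing is not allowed, equivalently postpones the whole excursion) for a duration $\alpha$ at the moment just before he would have dropped the gift; concretely, have the agent carry the gift along the path $\bar g$ but delay reaching the old dropoff point by $\alpha$, so the gift now lands at time $l+\alpha$ at the same location. Player I keeps exactly the path $\bar f$.

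The key steps, in order, are: (1) describe the delayed agent path $g_\alpha$ explicitly — it agrees with $\bar g$ up to the point where the gift was dropped, is shifted by $\alpha$ in time thereafter, and the gift is released at time $l+\alpha$ at the location $\bar g$ had at its original dropoff time $l$; (2) check that $g_\alpha$ is an admissible unit-speed path on $[0,T]$ for $T$ large (this is where one must be slightly careful, since one cannot literally ``wait'' — but one can absorb the delay by a back-and-forth wiggle of total duration $\alpha$, or simply invoke that the strategy space already tolerates slower-than-unit motion in $\mathcal F$ of (\ref{equ:bigspace}), so a pause is legal); (3) compare the four meeting/finding times. For each of the four agents $i$, the event that ends the game with agent $i$ is either a physical meeting of the moving players, a finding of the agent's gift by Player I, or (for the agent carrying I's information, but in $G_1$ only II has a gift) — so here it is a meeting or Player I finding the dropped gift. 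In the meeting case, the agent's position is only ever delayed, so the meeting time for that agent can increase by at most $\alpha$; in the gift-finding case, the gift is in the same place but available $\alpha$ later, so again that time increases by at most $\alpha$. Hence each $t^i$ increases by at most $\alpha$, so $R(\bar f, g_\alpha) \le x(l) + \alpha$, and since $x(l+\alpha)$ is the minimum over all valid pairs with dropoff time $l+\alpha$, we get $x(l+\alpha) \le x(l)+\alpha$, which is (\ref{eq:boundRDVLetter}).

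The main obstacle I anticipate is making precise the claim ``each meeting time increases by at most $\alpha$'' in the case where the game with agent $i$ ends by Player I \emph{finding the gift}: I must be sure that delaying the gift's availability by $\alpha$ does not cause Player I to wander off and find it much later — but since Player I's path $\bar f$ is unchanged and the gift is at the same location, if $\bar f$ visited that location at some time $s \ge l$ in the original, then in the delayed version Player I finds the gift at the first visit time $\ge l+\alpha$; this could a priori be much later than $s+\alpha$ if $\bar f$ only passes the location briefly near time $l$ and does not return for a long while. To handle this I would note that by Corollary \ref{cor:noturns} optimal paths are piecewise linear with unit slope and finitely many turns, and more importantly that in an optimal pair the relevant finding time $t^i$ is $\ge l$ with $\bar f$ arriving at the gift location moving toward it; a short continuity/monotonicity argument (or simply: the agent can also time-shift its pre-drop motion so that the geometry seen by Player I near the gift is a pure $\alpha$-delay) shows the increase is exactly $\le\alpha$. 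Alternatively, and more robustly, one delays \emph{both} players' paths near the critical moment, or one invokes Proposition \ref{prop:noturns} to argue the optimal configuration has the gift found on a ``first pass,'' so the $\le\alpha$ bound is immediate; I expect the paper's proof to take one of these shortcuts, and the continuity statement Proposition \ref{continuity1} to be proved by combining (\ref{eq:boundRDVLetter}) with the reverse inequality obtained symmetrically (delaying cannot help by more than $\alpha$ either, giving $x(l+\alpha)\ge x(l)-\alpha$).
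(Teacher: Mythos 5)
There is a genuine gap in your primary construction. You delay only Player II's path by $\alpha$ while keeping $\bar f$ fixed, and you assert that ``in the meeting case, the agent's position is only ever delayed, so the meeting time for that agent can increase by at most $\alpha$.'' On the line this is false: if at the original meeting time $t^i$ Player I turns and thereafter moves in the \emph{same} direction as the (now delayed) agent, the two maintain a gap of order $\alpha$ indefinitely, and the meeting is postponed by far more than $\alpha$ --- possibly until one of them next turns. Note that by Proposition \ref{prop:noturns} turning exactly at meeting times is the \emph{typical} behaviour of optimal paths, so this is not a pathological configuration. You correctly spot the analogous problem for the gift-finding times (Player I may not revisit the gift location soon after $l+\alpha$), but the fixes you sketch there --- a ``first pass'' argument, or delaying ``near the critical moment'' --- are vague and would still leave the meeting-time issue unaddressed.

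The paper's proof avoids all of this with a global time shift: take the optimal pair for dropoff time $l$ and have \emph{both} players stand still until time $\alpha$, then execute that pair. (Standing still is admissible since $\mathcal{F}$ in (\ref{equ:bigspace}) only requires paths to be $1$-Lipschitz; your worry that ``one cannot literally wait'' is unfounded.) Under this shift every event --- every meeting and every gift finding --- occurs exactly $\alpha$ later, the gift is dropped at time $l+\alpha$, and the value of the shifted pair is exactly $x(l)+\alpha$, giving $x(l+\alpha)\le x(l)+\alpha$ immediately. Your closing remark about obtaining the reverse inequality ``symmetrically'' is also off the mark: advancing the dropoff cannot be handled by a time shift, and the paper's Proposition \ref{boundingRDVLetter2} needs a genuinely different argument (with a weaker constant $2\alpha$) to get $x(l)\ge x(l-\alpha)-2\alpha$; continuity in Proposition \ref{continuity1} then follows from the two one-sided bounds together.
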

\begin{proof}
Let be given an optimal strategy with value $x(l)$. Considering the same strategy but instead of starting at time $0$, players are still until time $\alpha$. The resulting strategy leads to the estimate $(\ref{eq:boundRDVLetter})$. 
\end{proof}

\begin{proposition}\label{boundingRDVLetter2}(Bounds in $G_1$)\label{boundingRDVLetterminus} We denote $x(l)$ the value of $G_1$  if the gift is dropped off at time $l$. For $\alpha>0$ we have that
\begin{equation}\label{eq:boundRDVLetterminus}
x(l)\ge x(l-\alpha)-2\alpha.
\end{equation}
\end{proposition}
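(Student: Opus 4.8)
The plan is to imitate the proof of Proposition~\ref{boundingRDVLetter}, but the move is now in the harder direction: there one only had to \emph{postpone} the drop, which costs nothing beyond the delay, whereas here the drop must be made to happen \emph{earlier}, and Agent~II cannot be in two places at once. So I would start from an optimal pair $(\bar f,\bar g)$ for drop time $l$, with rendezvous value $x(l)$, and let $p=\bar g(l)$ be the point at which that strategy drops the gift. From it I construct a pair $(\hat f,\hat g)$ whose drop time is $l-\alpha$ and whose value is at most $x(l)+2\alpha$; since $x(l-\alpha)$ is the minimum over all strategies with drop time $l-\alpha$, this gives $x(l-\alpha)\le x(l)+2\alpha$, which is $(\ref{eq:boundRDVLetterminus})$. (If $\alpha\ge l$ the assertion is trivial, the drop being essentially immediate, so assume $0<\alpha<l$.)

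The construction is an ``$\alpha$-freeze''. Let $\hat g$ agree with $\bar g$ on $[0,l-\alpha]$, be constant on $[l-\alpha,l]$, and satisfy $\hat g(t)=\bar g(t-\alpha)$ for $t\ge l$; define $\hat f$ from $\bar f$ in exactly the same way. In this strategy Agent~II drops the gift at time $l-\alpha$ at its then-current position $\hat g(l-\alpha)=\bar g(l-\alpha)$, which by unit speed lies within $\alpha$ of $p$, and the gift becomes available $\alpha$ earlier than it does in the original game. Now compare configurations: on $[0,l-\alpha]$ the positions of Player~I and of the four agents $\pm D\pm\hat g(t)$ are \emph{identical} to those in the optimal strategy for drop time $l$, while on $[l,\infty)$ they are exactly those of that strategy delayed by $\alpha$. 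Hence every meeting of Player~I with an agent occurs at most $\alpha$ later than before. For an agent whose original game ended not by a meeting but by Player~I finding the gift, at some time $r\ge l$, Player~I in the frozen strategy sits at time $r+\alpha$ at $\bar f(r)$, the \emph{old} gift location; since the actual gift lies within $\alpha$ of this point and has been present since $l-\alpha<r$, an extra excursion of length at most $2\alpha$ lets Player~I reach it by time $r+2\alpha$. Thus in every case the meeting-or-finding time with each agent is at most its original value plus $2\alpha$, so $R(\hat f,\hat g)\le x(l)+2\alpha$.

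The step I expect to be the real obstacle is making the ``excursion of length at most $2\alpha$'' rigorous when more than one of the four agents' games ends by gift-finding: a priori such detours could compound, or a detour taken for one agent could drag Player~I away from a meeting already secured with another. I would control this using the one-dimensionality of the search space: all four displaced gift locations are obtained from the old ones by the single translation $\bar g(l-\alpha)-\bar g(l)$ of length $\le\alpha$ (up to reflection), so Player~I can be routed to collect them all within one excursion of total length $\le 2\alpha$, inserted after the last of the original events; alternatively the displacement is absorbed into the $\alpha$ of slack already created by the freeze. One should also note that freezing both players simultaneously on $[l-\alpha,l]$ cannot destroy a meeting that the original strategy produced inside that window, since freezing both for the same duration merely postpones their relative motion, and hence that meeting too, by exactly $\alpha$.
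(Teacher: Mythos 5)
Your overall strategy --- turning the optimal drop-time-$l$ pair into a drop-time-$(l-\alpha)$ pair whose value exceeds $x(l)$ by at most $2\alpha$ --- is the same as the paper's, but the step you yourself flag as ``the real obstacle'' is a genuine gap, and neither of your proposed repairs closes it. The problem is that a detour of length up to $2\alpha$ inserted so that Player I can pick up one displaced gift shifts his whole subsequent trajectory in time, so later meetings and gift findings are delayed by the freeze \emph{plus} all accumulated detours; with up to four configurations ending by gift finding, the individual times can a priori grow by as much as $8\alpha$ and their average by more than $2\alpha$, so the bound does not follow from ``each event is at most $2\alpha$ late.'' Your first repair is false as stated: writing $\delta=\bar g(l)-\bar g(l-\alpha)$, the four gift positions $\pm D\pm\bar g(l)$ move by $+\delta$ in two configurations and by $-\delta$ in the other two (not a single common translation), and they sit at mutual distances of order $D$, so no one excursion of total length $2\alpha$ can visit them all; moreover, postponing the pickup of agent $i$'s gift until after the last of the original events would push $t^i$ up to roughly $t_4+2\alpha$, far more than $t^i+2\alpha$. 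The second repair (``absorbed into the $\alpha$ of slack'') is an assertion, not an argument.

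For comparison, the paper does not freeze at the drop time at all: it runs the original strategy unchanged, merely dropping the gift at time $l-\alpha$, and intervenes only at the first time $t$ at which the altered drop position or availability actually changes an event. If the change is favourable (the gift is found earlier) nothing is done; otherwise a single pause of length $2\alpha$, combined with an out-and-back of length $\alpha$, restores the exact configuration of time $t$ with the gift question resolved, after which play is a pure $2\alpha$ time-shift of the original, so every remaining event is late by exactly $2\alpha$ and the average increases by at most $2\alpha$. (The paper is itself terse on why this intervention suffices once --- its closing remark that ``the reasoning can be applied at most four times'' is the same multiplicity issue you ran into --- but the key ingredient your argument is missing is this resynchronisation to the \emph{original configuration} delayed by a fixed amount, rather than patching each gift pickup locally and letting the patches compound.)
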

\begin{proof} Let $x(l)$ the optimal strategy if the gift is dropped off at time $l$. The modified strategy follows strategy $x(l)$ until time $l-\alpha$ at which it drops thegift. If the gift (or absence)  is never used by the strategy this does not change the remaining rendezvous time and hence the modified strategy statisfies $(\ref{eq:boundRDVLetterminus})$. 

 Else, write $t$ the smallest time at which an agent of player II finds the gift or notice the absence. The modified strategy follows $x(l)$ until time $t$. If the agent of player II finds the gift before time $t$ the modified strategy continues as $x(l)$. Indeed, smaller rendezvous time are consistent with $(\ref{eq:boundRDVLetterminus})$.

If the agent of player II does not find the gift before time $t$ because of the new position of the gift then, player I strategy is stopped at time $t$ until time $t+2\alpha$. In the meantime, the agent of player II continues for $\alpha$ unit of time and then back to the same position by time $t+2\alpha$. At this time, the positions of the players are the same as the one at time $t$ but now the agent of player II knows whether the gift was dropped off or not. We are then in the same situation as with the strategy $x(l)$ but delayed by $2\alpha$. Since the reasoning can be applied at most four times, we obtain the bound $(\ref{eq:boundRDVLetterminus})$.
\end{proof}

\begin{proposition}\label{continuity1} For $G_1$ the function $x(l)$ is continuous.
\end{proposition}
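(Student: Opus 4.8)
The plan is to derive a two-sided Lipschitz bound on $x(\cdot)$ directly from Propositions \ref{boundingRDVLetter} and \ref{boundingRDVLetter2}; continuity is then immediate.

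First I would rewrite the two earlier inequalities so that each compares the value at a smaller dropoff time with the value at a larger one. Proposition \ref{boundingRDVLetter} gives $x(l)\ge x(l+\alpha)-\alpha$, i.e. $x(l+\alpha)-x(l)\le\alpha$: moving the dropoff time to the right by $\alpha$ can increase the game value by at most $\alpha$. Proposition \ref{boundingRDVLetter2} gives $x(l)\ge x(l-\alpha)-2\alpha$; substituting $l\mapsto l+\alpha$ yields $x(l+\alpha)\ge x(l)-2\alpha$, i.e. $x(l)-x(l+\alpha)\le 2\alpha$: moving the dropoff time to the right by $\alpha$ can decrease the value by at most $2\alpha$.

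Combining the two, for any admissible dropoff times $l$ and $l+\alpha$ with $\alpha>0$ we get $-2\alpha\le x(l+\alpha)-x(l)\le\alpha$, hence $|x(l+\alpha)-x(l)|\le 2\alpha$. Equivalently $|x(l_1)-x(l_2)|\le 2|l_1-l_2|$ for all admissible $l_1,l_2$, so $x$ is Lipschitz with constant $2$, and in particular continuous. At the left endpoint $l=0$ only the one-sided statement is available, which still yields one-sided continuity there, and that is all that is needed.

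There is essentially no obstacle: the content is entirely carried by the two preceding propositions, and the only point requiring care is the bookkeeping of which direction each inequality controls. Proposition \ref{boundingRDVLetter} by itself bounds only the upward variation of $x$ as the dropoff time increases (delaying the whole strategy costs at most the delay), and it is Proposition \ref{boundingRDVLetter2} that supplies the matching bound on the downward variation (dropping the gift earlier and idling costs at most twice the advance, incurred at most four times); neither proposition alone suffices. One could alternatively give a self-contained argument by perturbing a near-optimal strategy for $x(l)$ into one for $x(l+\alpha)$, but reusing the two propositions is cleaner and shorter.
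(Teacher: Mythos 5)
Your proposal follows exactly the paper's route: the paper's proof is the one-line observation that combining Propositions \ref{boundingRDVLetter} and \ref{boundingRDVLetter2} bounds $|x(l+\alpha)-x(l)|$, and your bookkeeping of the two directions is correct. In fact you are slightly more careful than the paper, which asserts $|x(l+\alpha)-x(l)|\le\alpha$ even though the two propositions as stated only yield the constant $2$ you obtain (the $\le\alpha$ bound would require a sharper version of Proposition \ref{boundingRDVLetter2}); either constant suffices for continuity.
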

\begin{proof}
By combining Propositions \ref{boundingRDVLetter} and \ref{boundingRDVLetter2} we obtain that $\mid x(l+\alpha)-x(l)\mid \le \alpha$.

\begin{figure}\label{fig:continuity1}
\centering\hskip -5.8cm
\begin{subfigure}[b]{0.18\textwidth}
\includegraphics[scale=0.18]{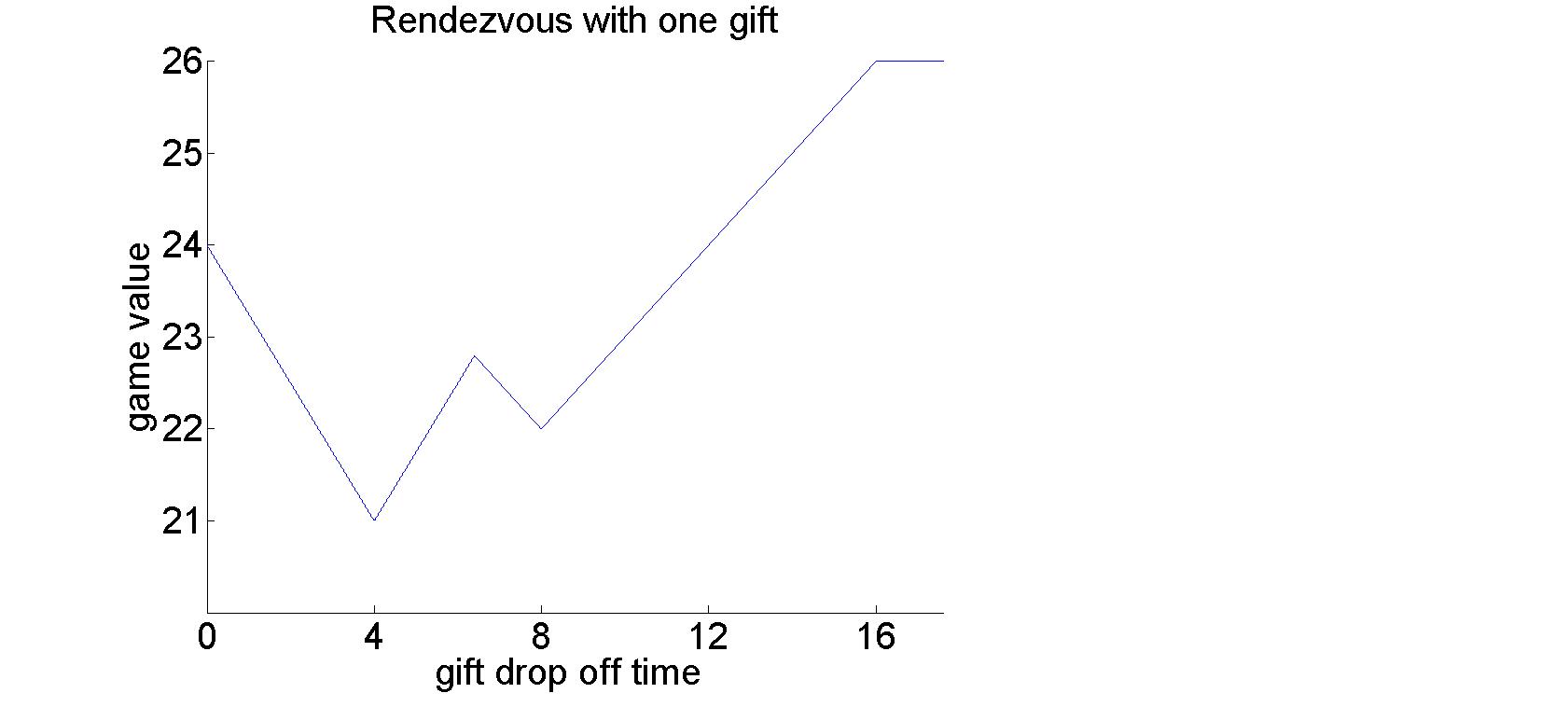}
\end{subfigure}\hskip 4.3cm 
\caption{Value of the game $G_1$  computed at dropping times in a regular mesh.}\label{fig:giftfig}
\end{figure}

\end{proof}

\begin{theorem}\label{theo:one}(Bounding optimal solutions for $G_1$) Let be given a regular mesh $\{d_i\}_{i=0,\ldots,N}$ of dropping time and the corresponding optimal solutions of $G_1$ when the dropping times are in the mesh, we denote these values $\{G_1(d_i)\}_{i=0,\ldots,N}$. Then, the optimal value of $G_1$ given that the dropping is constrained to belong to $[d_0,d_N]$ is in the interval $[x_{min}-\alpha, x_{min}]$, where $\alpha=d_{i+1}-d_i$ and $x_{min}=\min_{i}\bigl\{ G_1(d_i)\bigr\}$. Moreover, the optimal dropping times are contained in the set
\begin{equation*}
\biggl\{[d_{i-1},d_i]~\bigl|~ G(d_i)-\alpha \le x_{min}\biggr\}
\end{equation*}
  
\end{theorem}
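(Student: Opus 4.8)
The plan is to derive the whole statement from Proposition~\ref{boundingRDVLetter} alone, i.e. from the ``delay'' inequality $x(l)\ge x(l+\alpha)-\alpha$ for $\alpha\ge 0$ (the ``stand still for $\alpha$, then replay an optimal strategy'' argument). The two-sided Lipschitz bound of Proposition~\ref{continuity1} is stronger than what is needed; only the one-sided estimate toward \emph{later} dropping times is used. Write $x^{*}:=\min_{l\in[d_0,d_N]}x(l)$ for the value to be located, and recall $x(d_i)=G_1(d_i)$ and $x_{min}=\min_i G_1(d_i)$.

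\textbf{Step 1: the upper bound $x^{*}\le x_{min}$.} Each mesh point $d_i$ is a feasible dropping time inside $[d_0,d_N]$, so $x^{*}\le x(d_i)=G_1(d_i)$ for every $i$; taking the minimum over $i$ gives $x^{*}\le x_{min}$.

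\textbf{Step 2: the lower bound $x^{*}\ge x_{min}-\alpha$.} Given any $l\in[d_0,d_N]$, pick $i$ with $l\in[d_{i-1},d_i]$ and set $\beta:=d_i-l\in[0,\alpha]$. Proposition~\ref{boundingRDVLetter} applied with increment $\beta$ yields $x(l)\ge x(l+\beta)-\beta=x(d_i)-\beta\ge G_1(d_i)-\alpha\ge x_{min}-\alpha$. Minimizing over $l\in[d_0,d_N]$ gives $x^{*}\ge x_{min}-\alpha$, and together with Step 1 this shows $x^{*}\in[x_{min}-\alpha,\,x_{min}]$.

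\textbf{Step 3: localizing the optimal dropping times.} Suppose $l^{*}\in[d_0,d_N]$ is an optimal dropping time, so $x(l^{*})=x^{*}\le x_{min}$, and let $[d_{i-1},d_i]$ be a subinterval containing $l^{*}$. The estimate of Step 2 applied to $l^{*}$ gives $x_{min}\ge x^{*}=x(l^{*})\ge G_1(d_i)-\alpha$, i.e. $G_1(d_i)-\alpha\le x_{min}$; hence the subinterval $[d_{i-1},d_i]$ lies in the asserted set. Contrapositively, any subinterval $[d_{i-1},d_i]$ with $G_1(d_i)-\alpha>x_{min}$ contains no optimal dropping time, which is exactly the claim. (Edge cases, e.g. $l^{*}=d_i$ lying in two subintervals, are harmless: they only force $G_1(d_i)=x_{min}$.)

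\textbf{Expected difficulty.} There is no real obstacle here — the content is entirely in Proposition~\ref{boundingRDVLetter}. The only points requiring care are bookkeeping ones: within each subinterval $[d_{i-1},d_i]$ one must compare $x(l)$ with the value at the \emph{right} endpoint $d_i$ (which is why the exclusion test is phrased via $G_1(d_i)$ rather than $G_1(d_{i-1})$), and one should note that $x^{*}$ may be strictly below $x_{min}$ because the true optimum can fall strictly between mesh points — which is precisely why the localization window for the optimal dropping time has width $\alpha$ and not $0$.
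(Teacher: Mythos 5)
Your proposal is correct and follows exactly the route the paper intends: the paper gives no separate proof of Theorem~\ref{theo:one} beyond the prose remark that it is a direct consequence of the one-sided delay inequality $x(l)\ge x(l+\alpha)-\alpha$ of Proposition~\ref{boundingRDVLetter}, which is precisely how you derive both the interval $[x_{min}-\alpha,x_{min}]$ and the exclusion of subintervals with $G_1(d_i)-\alpha>x_{min}$. Your explicit bookkeeping (comparing each $l\in[d_{i-1},d_i]$ to the right endpoint $d_i$) is a faithful, slightly more careful write-up of the same argument.
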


\begin{proposition}\label{boundingRDVLetters}(Bounds for $G_2^{or}$, $G_2^{and}$) We denote $x(l_{1},l_2)$ the value of the game if the gifts are dropped off at times $l_1$ and $l_2$. For $\alpha>0$ we have that
\begin{align}
& x(l_1, l_2)\ge x(l_1+\alpha, l_2)-\alpha~~\text{if } l_1\ge l_2,\nonumber\\
&x(l_1, l_2)\ge x(l_1+\alpha, l_2+\alpha)-\alpha\label{eq:boundRDVLetters}\\
&x(l_1, l_2)\ge x(l_1, l_2+\alpha) -\alpha~~\text{if } l_2\ge l_1\nonumber
\end{align}
\end{proposition}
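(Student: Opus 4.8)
The plan is to mimic the proof of Proposition~\ref{boundingRDVLetter}, but now with two gifts, and to handle the three displayed inequalities separately according to how the two dropping times $l_1,l_2$ compare. The common idea throughout is: take an optimal strategy pair realizing the value $x(\cdot,\cdot)$ on the right-hand side, and \emph{slow it down} (insert a waiting period) in such a way that the resulting motion is a legal strategy for the game with the dropping times on the left-hand side; then bound the loss incurred by the delay.

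For the middle inequality $x(l_1,l_2)\ge x(l_1+\alpha,l_2+\alpha)-\alpha$, the argument is essentially identical to Proposition~\ref{boundingRDVLetter}: given an optimal strategy for the game with dropping times $(l_1+\alpha,l_2+\alpha)$, let both players stand still on $[0,\alpha]$ and then execute that strategy. Staying still for time $\alpha$ shifts \emph{both} dropoff times forward by $\alpha$, so relative to the new time origin the gifts are dropped at $l_1$ and $l_2$ respectively, and every meeting/finding time increases by exactly $\alpha$; averaging gives the bound. For the first inequality (case $l_1\ge l_2$), I want to go from dropping times $(l_1,l_2)$ to $(l_1+\alpha,l_2)$, i.e. delay only the \emph{later} gift. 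The key observation, already flagged by the authors in the paragraph preceding the proposition, is that because $l_1$ is the latest dropoff, I can take an optimal strategy for $(l_1,l_2)$ and simply have the owner of gift~$1$ postpone the act of dropping it by $\alpha$ — keep his path unchanged up to time $l_1$, then instead of dropping, continue along the same path but hold the gift, drop it at $l_1+\alpha$, and resume. The only meeting times that can change are those in which some agent would have found gift~$1$ during $[l_1,l_1+\alpha]$; since we may (as in Proposition~\ref{prop:noturns}, case~2) absorb this by having the affected player wait rather than find the gift early, the damage is at most $\alpha$ per affected meeting time, hence at most $\alpha$ on the average. The third inequality is the mirror image with the roles of $l_1$ and $l_2$ swapped.

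The step I expect to be the main obstacle is the bookkeeping in the first and third inequalities: when we delay dropping gift~$1$, an agent of Player~II (in $G_2^{or}$) or Player~I who was relying on finding gift~$1$ in the window $[l_1,l_1+\alpha]$ must be rerouted, and one has to check carefully that in every one of the four agent configurations the rerouting (a) keeps the path $1$-Lipschitz and (b) delays \emph{that} meeting time by at most $\alpha$ while leaving the others untouched — in particular in $G_2^{and}$, where \emph{both} gifts must be found, one must verify that delaying gift~$1$ does not force an extra delay on the leg that finds gift~$2$. The cleanest way to organize this is to invoke the local-surgery arguments of Proposition~\ref{prop:noturns} verbatim on the single interval $[l_1,l_1+\alpha]$: since the strategy is optimal, it already satisfies the no-turns conclusion, and the effect of the delay is confined to that one interval. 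I would therefore write the proof as three short paragraphs, one per inequality, each reducing to "insert a wait of length $\alpha$ and apply the relevant case of Proposition~\ref{prop:noturns} to repair the at most four affected meeting times."

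\begin{proof}
We prove the three inequalities in turn. For the middle one, let $(f,g)$ be an optimal strategy pair for the game with dropoff times $(l_1+\alpha, l_2+\alpha)$. Define a new pair in which both players remain at their starting positions on $[0,\alpha]$ and then follow $f$ and $g$. Relative to the new time origin the gifts are now dropped at times $l_1$ and $l_2$, so this is a legal pair for the game with dropoff times $(l_1,l_2)$, and every one of the four meeting times is exactly $\alpha$ larger than before; averaging yields $x(l_1,l_2)\le x(l_1+\alpha,l_2+\alpha)+\alpha$, which is the claim.

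For the first inequality, assume $l_1\ge l_2$ and let $(f,g)$ be optimal for dropoff times $(l_1,l_2)$. Modify only the owner of gift~$1$: he follows his original path, but instead of releasing gift~$1$ at time $l_1$ he holds it and continues unchanged, releasing it at time $l_1+\alpha$. The path is unchanged, hence still $1$-Lipschitz, and gift~$2$ is unaffected. The only meeting times that can increase are those of agents who, under $(f,g)$, would have found gift~$1$ at some time in $[l_1, l_1+\alpha]$. For each such agent we apply the surgery of case~2 of Proposition~\ref{prop:noturns} on the interval $[l_1,l_1+\alpha]$: the affected player waits at the gift location until the gift arrives, which postpones that single meeting time by at most $\alpha$ and leaves all other meeting times unchanged. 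Averaging over the four agents, the expected meeting time rises by at most $\alpha$, so $x(l_1+\alpha,l_2)\le x(l_1,l_2)+\alpha$.

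The third inequality is obtained from the first by exchanging the roles of the two gifts: when $l_2\ge l_1$, delay the release of gift~$2$ from $l_2$ to $l_2+\alpha$ and repeat the argument verbatim. This establishes $(\ref{eq:boundRDVLetters})$.
\end{proof}
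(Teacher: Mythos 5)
Your middle inequality is proved exactly as in the paper (both players wait for a time $\alpha$ at the start, shifting every meeting time by exactly $\alpha$), but your argument for the first and third inequalities has a genuine gap. You keep the owner's path unchanged and merely postpone the release of gift~1 from $l_1$ to $l_1+\alpha$. Since the owner keeps moving at unit speed, the gift is now deposited at the point his path occupies at time $l_1+\alpha$, not at the original drop location --- a point up to distance $\alpha$ away. Consequently it is not true that ``the only meeting times that can increase are those of agents who would have found gift~1 at some time in $[l_1,l_1+\alpha]$'': \emph{every} configuration whose relevant event is finding gift~1, at whatever later time, is affected, because the gift is no longer where that agent's path goes. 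Repairing this is not the local surgery of case~2 of Proposition~\ref{prop:noturns} (there the gift's location is fixed and only its appearance time changes; here the finder must deviate spatially), and in $G_2^{and}$, where the game continues after one gift is found, the deviation propagates to the finder's subsequent meeting and finding times, so a delay of at most $\alpha$ per configuration is not established. Note also that your construction never uses the hypothesis $l_1\ge l_2$, which should have been a warning sign.

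The paper's proof avoids all of this: keep the optimal $(l_1,l_2)$-strategy unchanged up to time $l_1$, then \emph{freeze both players} (hence all four agents) for a time span $\alpha$, drop gift~1 at time $l_1+\alpha$ at the unchanged location, and resume. Everything after time $l_1$ is a pure time translation by $\alpha$, so each of the four meeting times increases by at most $\alpha$ and no rerouting is needed; the hypothesis $l_1\ge l_2$ is exactly what guarantees that gift~2 has already been dropped by time $l_1$, so its dropping time is not also shifted and the comparison is indeed with $x(l_1+\alpha,l_2)$. Your first and third paragraphs should be replaced by this freezing argument.
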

\begin{proof}
The idea of the proof of this proposition is analogous to the one of propositions \ref{boundingRDVLetter} and \ref{boundingRDVLetterminus}. Consider the first line of $(\ref{eq:boundRDVLetters})$ and the strategy that leads to $x(l_1, l_2)$. If $l_1\ge l_2$ we keep this strategy unchanged until time $l_1$, then we freeze the motions of both players for a time span of $\alpha$ and, finally resume the motion. In the second strategy the second player drops off the gift at time $l_1+\alpha$ and all the subsequent meeting time are delayed at most by $\alpha$ units of time. This leads to the first line of $(\ref{eq:boundRDVLetters})$. The following lines are proved in a similar way.
\end{proof}

\begin{corollary}\label{cor:bound1} Assume $l_1\ge l_2+\alpha$, if $x(l_1+\alpha,l_2)\ge x_{min} + 2\alpha$ then, $x(l_1-\beta,l_2-\gamma)\ge x_{min}$ $\forall~ 0\le\beta, \gamma \le \alpha$.
\end{corollary}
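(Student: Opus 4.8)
The plan is to obtain this purely by chaining two of the three inequalities of Proposition \ref{boundingRDVLetters}, after first checking that, under the standing hypothesis $l_1\ge l_2+\alpha$, every pair we touch stays in the regime where the first gift is dropped no earlier than the second. Concretely, for all $0\le\beta,\gamma\le\alpha$ we have $l_1-\beta\ge l_1-\alpha\ge l_2\ge l_2-\gamma$, so the pairs $(l_1-\beta,l_2-\gamma)$, $(l_1-\beta+\gamma,l_2)$ and $(l_1+\alpha,l_2)$ all satisfy ``first coordinate $\ge$ second coordinate''; hence the first line of Proposition \ref{boundingRDVLetters} is legitimately applicable at the intermediate point.

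First I would raise both dropoff times by $\gamma$ using the middle line of Proposition \ref{boundingRDVLetters} (which carries no sign restriction), giving
\[
x(l_1-\beta,\,l_2-\gamma)\ \ge\ x(l_1-\beta+\gamma,\,l_2)-\gamma .
\]
Since $l_1-\beta+\gamma\ge l_1-\alpha\ge l_2$, I would then apply the first line of Proposition \ref{boundingRDVLetters} with the nonnegative shift $\alpha+\beta-\gamma$ acting on the first coordinate only:
\[
x(l_1-\beta+\gamma,\,l_2)\ \ge\ x(l_1+\alpha,\,l_2)-(\alpha+\beta-\gamma).
\]
Adding the two displays yields $x(l_1-\beta,l_2-\gamma)\ge x(l_1+\alpha,l_2)-\alpha-\beta$, and then the assumption $x(l_1+\alpha,l_2)\ge x_{min}+2\alpha$ together with $\beta\le\alpha$ gives $x(l_1-\beta,l_2-\gamma)\ge x_{min}+(\alpha-\beta)\ge x_{min}$, as claimed.

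There is essentially no obstacle beyond bookkeeping: the only points to verify are that the shift $\alpha+\beta-\gamma$ is nonnegative and that $(l_1-\beta+\gamma,l_2)$ still lies in the ``$l_1\ge l_2$'' region needed for the first line of Proposition \ref{boundingRDVLetters} — both are immediate from $\beta,\gamma\in[0,\alpha]$ and $l_1\ge l_2+\alpha$. The degenerate case $\alpha+\beta-\gamma=0$ (forcing $\beta=0$, $\gamma=\alpha$) needs no proposition at all, since then $(l_1-\beta+\gamma,l_2)=(l_1+\alpha,l_2)$ and the second display is an equality.
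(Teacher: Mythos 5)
Your proof is correct and is essentially identical to the paper's own argument: both chain the middle inequality of Proposition \ref{boundingRDVLetters} (shifting both times by $\gamma$) with the first inequality (shifting the first time by $\alpha+\beta-\gamma$), verify $l_1-\beta+\gamma\ge l_2$ from $l_1\ge l_2+\alpha$, and conclude via $x(l_1+\alpha,l_2)-\alpha-\beta\ge x_{min}$. Your explicit checks of the nonnegativity of the shift and the degenerate case are minor additions the paper leaves implicit.
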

\begin{proof}
We get using the two first lines of $(\ref{eq:boundRDVLetters})$
\begin{align*}
x(l_1-\beta,l_2-\gamma)&\ge x(l_1-\beta+\gamma,l_2)-\gamma\ge x(l_1+\alpha,l_2)-\gamma-(\alpha+\beta-\gamma)\\&\ge x(l_1+\alpha,l_2)-\alpha-\beta\ge  x_{min}.
\end{align*}
The first inequality comes by adding $\gamma$ to $l_1$ and $l_2$ and the middle inequality of $(\ref{eq:boundRDVLetters})$. By assumption we have $-\alpha\le -\beta+\gamma$, hence $l_1-\beta+\gamma\ge l_2+\alpha-\beta+\gamma\ge l_2$, and the first line of $(\ref{eq:boundRDVLetters})$ applies leading to the second inequality. To conclude we use the hypothesis and direct computations.
\end{proof}

\begin{corollary}\label{cor:bound2} Assume $l_2\ge l_1+\alpha$, if $x(l_1,l_2+\alpha)\ge x_{min} + 2\alpha$ the, $x(l_1-\beta,l_2-\gamma)\ge x_{min}$ $\forall 0\le\beta, \gamma \le \alpha$.
\end{corollary}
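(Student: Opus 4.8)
The plan is to mirror the proof of Corollary~\ref{cor:bound1} with the roles of $l_1$ and $l_2$ interchanged: here it is $l_2$ that is the \emph{later} dropping time (by at least $\alpha$), so the one-coordinate delay bound available to us is the third line of $(\ref{eq:boundRDVLetters})$ rather than the first, while the diagonal (second) line is used exactly as before.

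First I would apply the diagonal bound (second line of $(\ref{eq:boundRDVLetters})$) at the point $(l_1-\beta,\,l_2-\gamma)$ with increment $\beta\ge 0$, which gives
\[
x(l_1-\beta,\,l_2-\gamma)\ \ge\ x(l_1,\,l_2-\gamma+\beta)-\beta .
\]
Then I would delay only the $l_2$-coordinate, using the third line of $(\ref{eq:boundRDVLetters})$ at the point $(l_1,\,l_2-\gamma+\beta)$ with increment $\alpha+\gamma-\beta$ (nonnegative since $\gamma\ge 0$ and $\beta\le\alpha$), obtaining
\[
x(l_1,\,l_2-\gamma+\beta)\ \ge\ x(l_1,\,l_2+\alpha)-(\alpha+\gamma-\beta).
\]
Chaining the two displays and cancelling the $\beta$ terms yields $x(l_1-\beta,l_2-\gamma)\ge x(l_1,l_2+\alpha)-\alpha-\gamma\ge x(l_1,l_2+\alpha)-2\alpha$, and the hypothesis $x(l_1,l_2+\alpha)\ge x_{min}+2\alpha$ then gives $x(l_1-\beta,l_2-\gamma)\ge x_{min}$, as required.

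The only real point to verify — and the main (minor) obstacle — is that the third line of $(\ref{eq:boundRDVLetters})$ is applicable, i.e. that at the intermediate point the second coordinate is indeed the larger one: $l_2-\gamma+\beta\ge l_1$. This holds because $\gamma\le\alpha$ and $\beta\ge 0$ give $l_2-\gamma+\beta\ge l_2-\alpha$, and $l_2-\alpha\ge l_1$ is precisely the standing assumption $l_2\ge l_1+\alpha$. The nonnegativity of every increment used is equally immediate from $0\le\beta,\gamma\le\alpha$. With these ordering and sign checks in hand the chain of inequalities is purely mechanical, exactly as in Corollary~\ref{cor:bound1}.
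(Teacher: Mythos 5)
Your proof is correct and follows essentially the same route as the paper's: first the diagonal bound of $(\ref{eq:boundRDVLetters})$ with increment $\beta$, then the third line with increment $\alpha+\gamma-\beta$, yielding $x(l_1-\beta,l_2-\gamma)\ge x(l_1,l_2+\alpha)-\alpha-\gamma\ge x_{min}$. The only difference is that you explicitly verify the ordering condition $l_2-\gamma+\beta\ge l_1$ and the nonnegativity of the increments, which the paper leaves implicit.
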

\begin{proof} We get using the two last lines of $(\ref{eq:boundRDVLetters})$
\begin{align*}
x(l_1-\beta,l_2-\gamma)&\ge x(l_1,l_2+\beta-\gamma)-\beta\ge x(l_1,l_2+\alpha)-\beta-(\alpha-\beta+\gamma)\\&\ge x(l_1,l_2+\alpha)-\alpha-\gamma\ge  x_{min}.
\end{align*}
\end{proof}

\begin{corollary}\label{cor:bound3} Assume $l_1=l_2$, if $\min\bigl(x(l_1+\alpha,l_2), x(l,l+\alpha)\bigr)\ge x_{min}+2\alpha$ then, $x(l_1-\beta, l_2-\gamma)\ge x_{min}, \forall 0\le \beta, \gamma\le \alpha$.
\end{corollary}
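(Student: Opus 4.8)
The plan is to follow the template of the proof of Corollary \ref{cor:bound1}, splitting into the two symmetric cases $\beta\ge\gamma$ and $\gamma\ge\beta$ and chaining the three inequalities of $(\ref{eq:boundRDVLetters})$. Since $l_1=l_2=l$, the point $(l-\beta,l-\gamma)$ is obtained from $(l,l)$ by nonnegative shifts in each coordinate, which is exactly the regime in which the diagonal inequality and the two ``one coordinate larger'' inequalities of $(\ref{eq:boundRDVLetters})$ can all be invoked without ever leaving their domain of validity; the case split on $\beta$ versus $\gamma$ is precisely what controls which one-sided inequality is legitimate at each step.

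First I would treat the case $\beta\ge\gamma$. Applying the middle (diagonal) line of $(\ref{eq:boundRDVLetters})$ with increment $\beta$ gives $x(l-\beta,l-\gamma)\ge x(l,\,l+(\beta-\gamma))-\beta$. Because $\beta-\gamma\ge 0$, the second coordinate now dominates the first, so the third line of $(\ref{eq:boundRDVLetters})$ applies; increasing the second coordinate by the nonnegative amount $\alpha-(\beta-\gamma)\ge 0$ (nonnegative since $\beta\le\alpha$) yields $x(l,\,l+(\beta-\gamma))\ge x(l,l+\alpha)-(\alpha-\beta+\gamma)$. Combining the two and using the hypothesis $x(l,l+\alpha)\ge x_{min}+2\alpha$ gives $x(l-\beta,l-\gamma)\ge x(l,l+\alpha)-\alpha-\gamma\ge x_{min}+\alpha-\gamma\ge x_{min}$, the last inequality because $\gamma\le\alpha$.

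Then I would handle the symmetric case $\gamma\ge\beta$ in the same way with the two coordinates exchanged: the diagonal line of $(\ref{eq:boundRDVLetters})$ with increment $\gamma$ gives $x(l-\beta,l-\gamma)\ge x(l+(\gamma-\beta),\,l)-\gamma$; now the first coordinate dominates, so the first line of $(\ref{eq:boundRDVLetters})$ applies, and increasing the first coordinate by $\alpha-(\gamma-\beta)\ge 0$ gives $x(l+(\gamma-\beta),\,l)\ge x(l+\alpha,l)-(\alpha-\gamma+\beta)$; chaining and using $x(l+\alpha,l)\ge x_{min}+2\alpha$ yields $x(l-\beta,l-\gamma)\ge x_{min}+\alpha-\beta\ge x_{min}$. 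Every pair with $0\le\beta,\gamma\le\alpha$ falls into one of the two cases, so this establishes the claim. I do not expect a genuine obstacle here beyond the bookkeeping of which of the three inequalities of $(\ref{eq:boundRDVLetters})$ is valid at each step; the only thing to verify carefully is that a one-sided inequality is never used when the ``wrong'' coordinate is larger, and the case split on $\beta$ against $\gamma$ guarantees exactly that.
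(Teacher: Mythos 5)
Your proposal is correct and follows the same route as the paper: for $\beta\ge\gamma$ you apply the diagonal inequality of $(\ref{eq:boundRDVLetters})$ with increment $\beta$ and then the third line with increment $\alpha-\beta+\gamma$, which is exactly the paper's chain, and the paper disposes of the case $\gamma\ge\beta$ by symmetry just as you do explicitly. No gaps.
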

\begin{proof} We assume that $\beta\ge\gamma$, the symmetric case is proved similarly. We note $l=l_1=l_2$. Then, using successively the second and third line of $(\ref{eq:boundRDVLetters})$ we obtain,
\begin{align*}
x(l-\beta,l-\gamma)&\ge x( l, l+\beta-\gamma) -\beta\ge x(l_1,l_2+\alpha)-\beta-(\alpha-\beta+\gamma)\\
&\ge x(l_1,l_2+\alpha)-\alpha-\gamma\ge x_{min}+\alpha-\gamma\ge x_{min}.
\end{align*}
\end{proof}

\begin{figure}\label{fig:cor}
\centering
\includegraphics[scale=0.3]{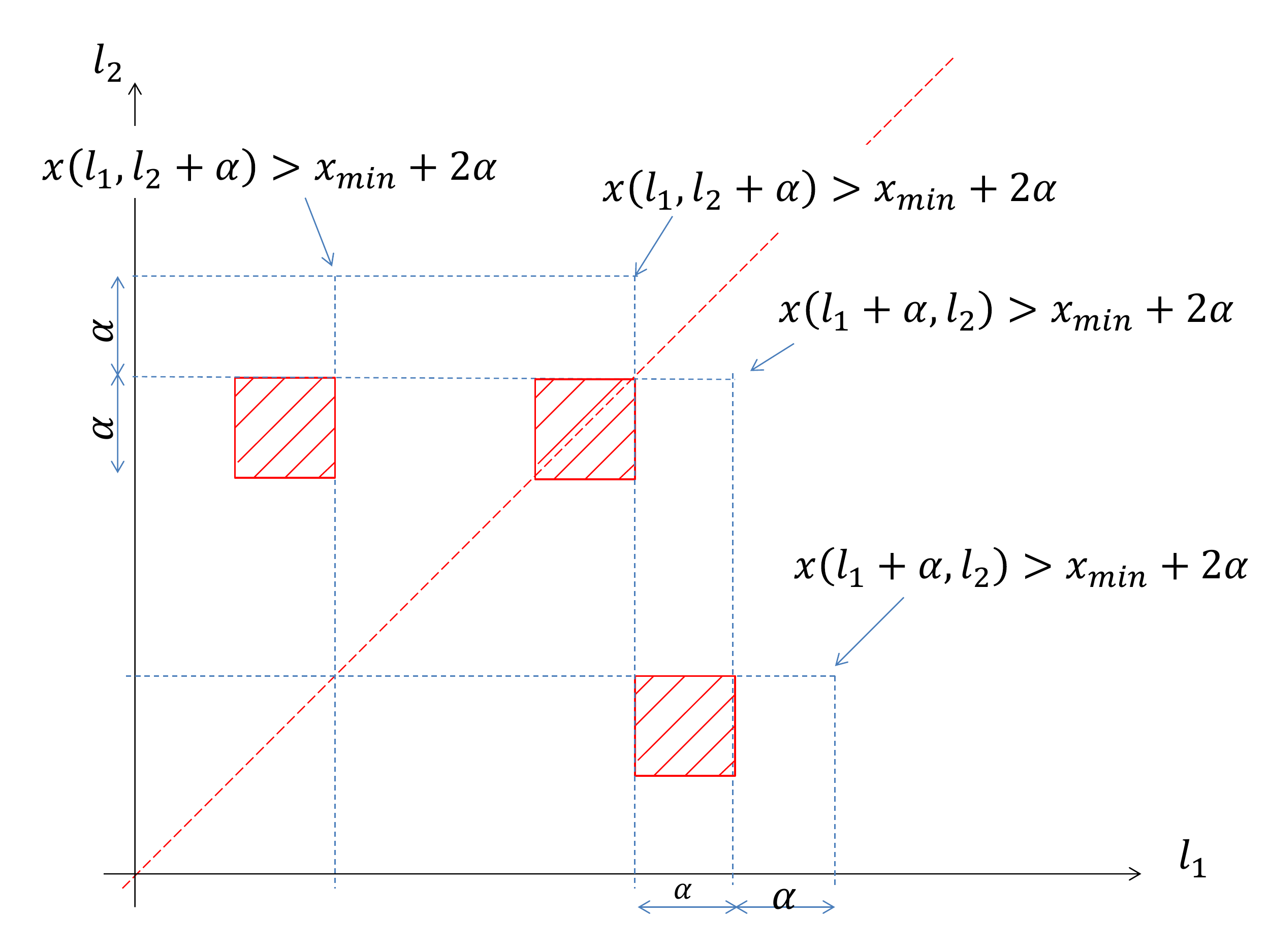}\caption{Illustrations of Corolaries \ref{cor:bound1},\ref{cor:bound2},\ref{cor:bound3}. The hatched regions cannot include dropping times leading to optimal solution.} \label{fig:excluded}
\end{figure}

\begin{theorem}\label{theo:two}(Bounds for $G_2^{or}$, $G_2^{and}$) Let be given a regular mesh $\{(d_i,d_j)\}$ for $i,j=0,\ldots,N$ of dropping times and the corresponding optimal solutions of the game $\{X(d_i,d_j)\}_{i,j=0,\ldots,N}$, where $X=G_2^{or}$  or $G_2^{and} $. Then, the optimal value of $X$ given that the dropping times belong to $[d_0,d_N]\times[d_0,d_N]$ is in the interval $\bigl[x_{min}-2\alpha, x_{min}\}\bigr]$, where $x_{min}=\min_{i,j} \{X(d_i,d_j)\}$. Moreover, the optimal dropping times are contained in the set
\begin{align*}
&\biggl\{[d_{i-1}, d_i]\times [d_{j-1}, d_j] \bigl| (i\ge j)\text{ and } X(d_{i+1}, d_j)\le x_{min} +2\alpha\biggr\} \bigcup\\
&\biggl\{[d_{i-1}, d_i]\times [d_{j-1}, d_j] \bigl| (j\ge i)\text{ and } X(d_{i}, d_{j+1})\le x_{min} +2\alpha\biggr\}\\
\end {align*}
\end{theorem}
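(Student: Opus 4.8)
The plan is to assemble Theorem \ref{theo:two} directly from Proposition \ref{boundingRDVLetters} together with Corollaries \ref{cor:bound1}, \ref{cor:bound2} and \ref{cor:bound3}, exactly in parallel with how Theorem \ref{theo:one} follows from Propositions \ref{boundingRDVLetter} and \ref{boundingRDVLetter2}. First I would establish the \emph{upper bound} $\bar X\le x_{min}$: since each $(d_i,d_j)$ is a feasible pair of dropping times with game value $X(d_i,d_j)$, and the optimum over the continuous box $[d_0,d_N]^2$ is a minimum over a superset of the mesh, we get $\bar X\le \min_{i,j}X(d_i,d_j)=x_{min}$. Here one should be a little careful and note (as an aside, referencing Corollary \ref{cor:noturns}) that for each fixed dropping-time pair the game value $X(d_i,d_j)$ that the program computes is genuinely the optimal rendezvous value for that pair, so these are legitimate feasible values.

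Next I would establish the \emph{lower bound} $\bar X\ge x_{min}-2\alpha$. Take any dropping-time pair $(l_1,l_2)\in[d_0,d_N]^2$ and let $(d_i,d_j)$ be the mesh point with $d_{i-1}\le l_1\le d_i$ and $d_{j-1}\le l_2\le d_j$ (using the convention $d_{-1}=d_0$ at the boundary), so $d_i-l_1=\beta$ and $d_j-l_2=\gamma$ with $0\le\beta,\gamma\le\alpha$. Applying the appropriate line of \eqref{eq:boundRDVLetters} once or twice — the first line if $l_1\ge l_2$, the third if $l_2\ge l_1$, chaining through the middle line to move the smaller coordinate up first — one transports from $(l_1,l_2)$ up to $(d_i,d_j)$ losing at most $2\alpha$, giving $x(l_1,l_2)\ge X(d_i,d_j)-2\alpha\ge x_{min}-2\alpha$. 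This is essentially the computation already carried out inside the corollaries, specialized to $\beta,\gamma\le\alpha$, so it is routine; the only mild subtlety is handling the case distinction $l_1\gtrless l_2$ cleanly, which is exactly why Proposition \ref{boundingRDVLetters} is stated with those three lines.

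Finally, for the \emph{localization of optimal dropping times}, I would argue by contraposition. Suppose the optimal pair $(l_1^*,l_2^*)$ lies in a cell $[d_{i-1},d_i]\times[d_{j-1},d_j]$; write $(l_1,l_2)=(d_i,d_j)$ so that $l_1^*=d_i-\beta$, $l_2^*=d_j-\gamma$ with $0\le\beta,\gamma\le\alpha$. If $i\ge j$ (so $l_1\ge l_2$) and the cell were \emph{not} in the first listed set, then $X(d_{i+1},d_j)>x_{min}+2\alpha$, and Corollary \ref{cor:bound1} (with $l_1=d_i$, $l_2=d_j$, noting $l_1\ge l_2+\alpha$ holds when $i\ge j+1$, with the boundary case $i=j$ deferred to Corollary \ref{cor:bound3}) gives $x(l_1^*,l_2^*)=x(d_i-\beta,d_j-\gamma)\ge x_{min}$, which combined with the already-proved upper bound $\bar X\le x_{min}$ forces $x(l_1^*,l_2^*)=x_{min}=\bar X$ — consistent, but then this cell may be dropped from consideration since an equally good optimum is attained at a mesh point or on a neighboring retained cell; more simply, a \emph{strict} inequality $X(d_{i+1},d_j)> x_{min}+2\alpha$ yields $x(l_1^*,l_2^*)\ge x_{min}\ge \bar X$, and since $x(l_1^*,l_2^*)=\bar X$ we do not actually exclude it, so the statement should be read as: every cell containing an optimal pair satisfies the displayed condition. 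The symmetric case $j\ge i$ uses Corollary \ref{cor:bound2}, and the diagonal case uses Corollary \ref{cor:bound3}. The main obstacle — really a matter of careful bookkeeping rather than a genuine difficulty — is making the index shifts at the mesh boundary ($i=0$ or $i=N$) and the diagonal cells ($i=j$) consistent with the hypotheses $l_1\ge l_2+\alpha$ resp. $l_2\ge l_1+\alpha$ required by the corollaries, so that the union of the two families in the theorem statement genuinely covers all cells that can contain an optimal pair.
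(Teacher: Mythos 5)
Your proposal is correct and follows essentially the same route as the paper, which itself offers no written proof of Theorem \ref{theo:two} beyond the remark that it is ``a direct application'' of Proposition \ref{boundingRDVLetters} and Corollaries \ref{cor:bound1}--\ref{cor:bound3}; you assemble exactly those ingredients (feasibility of mesh points for the upper bound, transport to the upper-right corner of a cell losing at most $2\alpha$ for the lower bound, and the corollaries by contraposition for the localization). The degenerate-equality and boundary-index subtleties you flag in the localization step are real but are imprecisions of the theorem statement itself rather than gaps in your argument.
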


\section{Application of Theorem \ref{theo:one}}
We show results obtained by applying Theorem \ref{theo:one} to the game $G_1$. Figure \ref{fig:giftfig} shows the plots of the game values for various dropping times. The minimal computed values are obtained for dropping time $4.0$ and the minimal values is $21$. We show in tables \ref{table:G1val} the values computed on the regular mesh aroung the better dropping times observed. Application of Theorem \ref{theo:one} shows that the optimal dropping time must be in the intervals $[3.99968, 4.00016]$. The optimal value being in the interval $[20.99984, 21]$.

\begin{table}
\begin{tabular}{ c  c  r  c    c  c  c }
Dropping time  &\vline& 3.99968 &3.99984 & 4 & 4.00016 & 4.00032 \\  \hline
$G_1$ &\vline &21.00024 & 21.00012 & 21 &21.00012 & 21.00024 \\ 
\end{tabular}\caption{The minimal solutions of $G_1$ computed with a regular mesh of $0.00016$ of the interval $[0,160]$ is $21$ for a gift dropping time of $4$. Optimal dropping times can be only around $4$, see Figure \ref{fig:giftfig}. More precisely in the interval $[3.99968, 4.00016]$ }\label{table:G1val}
\end{table}



\section{Application of Theorem \ref{theo:two}}


\begin{figure}
\hskip -2cm\includegraphics[height=7cm,width=10cm]{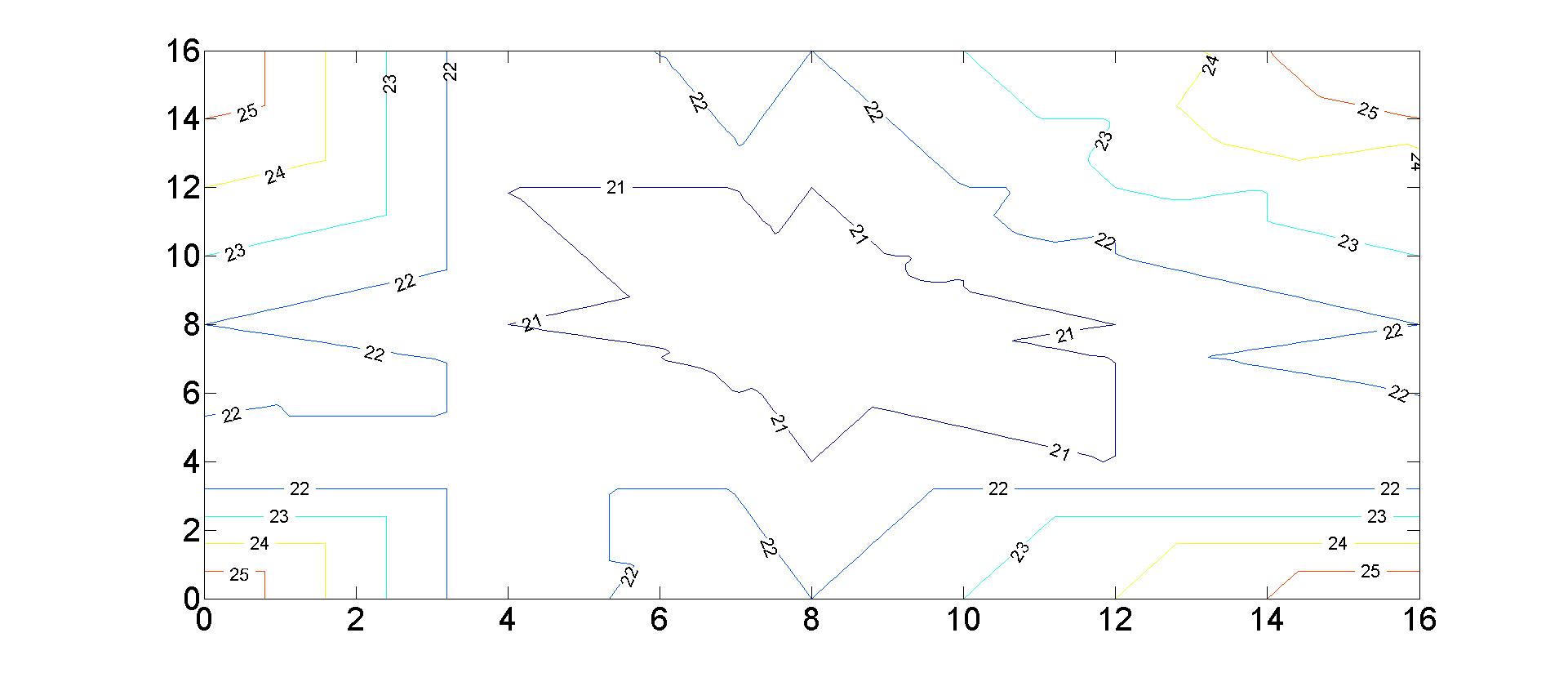}\hskip 0cm\includegraphics[height=6cm,width=8cm]{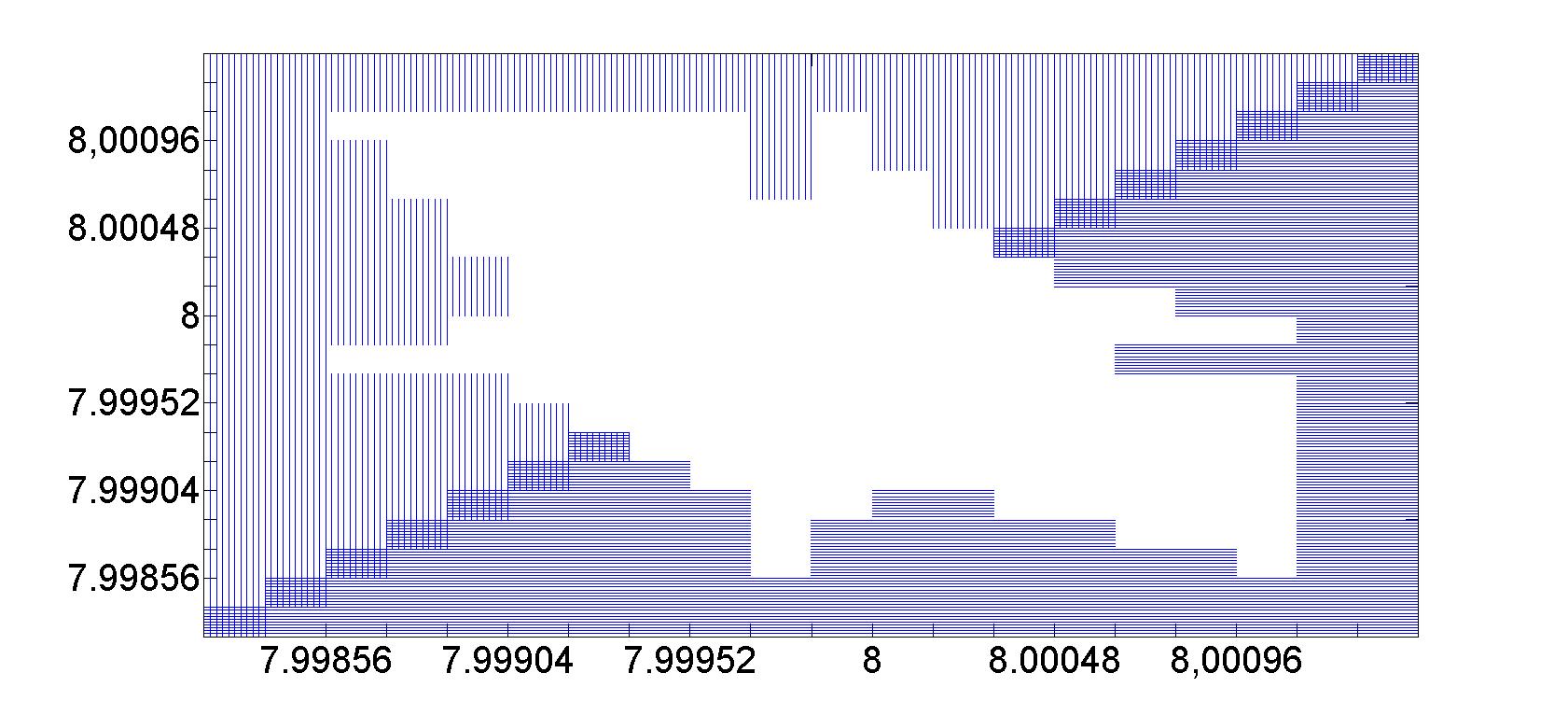}\caption{Left: Contour plot of the values of the $G_2^{or}$ game. The minimal value computed is $20$ for dropping times $ (8,8)$. Right: Enlargement of the $G_2^{or}$ game value around the dropping time $(8,8)$. The hatched region corresponds to the excluded region by Theorem \ref{theo:two}. }\label{fig:giftor}
\end{figure}
We start by applying Theorem \ref{theo:two} to $G_2^{and}$. On Figure \ref{2marksfigcontour} we observe minimal values around the marker dropping times $(0,0), (8,8), (4,x)$ and $(x,4)$. The game value is  in the interval $[23.99968,24]$. This result from the facts that the minimal value computed is $24$ and the mesh size is $0.00016$ and Theorem \ref{theo:two}.

On Figure  \ref{fig:marker()} we plot enlargement of Figure \ref{2marksfigcontour} around the points $(0,0), (4,4)$ and $(8,8)$, the hatched regions correspond to regions where it is excluded that a better solution than $24$ can be found, see Figure \ref{fig:excluded}.
Around the dropping times $(0,0)$ we see that the optimal dropping times must be in $[0, 0.00032]\times [0, 0.00016]\cup[0, 0.00016]\times [0, 0.00032]$. The values are displayed on Table \ref{table:M2val}. We have stripes of solutions corresponding to the dropping times $(4,x)$ and $(x,4)$. These stripes are contained in the region $[3.99904, 4.00048]\times x$ and $x\times [3.99904, 4.00048]$. And finally, around the dropping time $(8,8)$ the optimal ones belong to the region $[7.99856, 8.00112]\times [7.99856, 8.00112]$.

\begin{table}
\begin{tabular}{c  c  c  c  c  c  c c}
\multirow{4}{*}{$l_2$}&0.00064&\vline &24.00032&24.00040&24.00048&24.00056&24.00064\\
&0.00048&\vline &24.00024&24.00032&24.00040&24.00048&24.00056\\
&0.00032&\vline  & 24.00016 & 24.00024& 24.00032 & 24.00040&24.00048\\
&0.00016&\vline  & 24.00008 & 24.00016 & 24.00024 & 24.00032 &24.00040\\
 &0&\vline  & 24 & 24.00008 & 20.00016 & 24.00024&24.00032\\ \cline{2-8}
&&\vline  & 0 & 0.00016 & 0.00032 & 0.00048&0.00064\\ 
&&&\multicolumn{4}{c}{$l_1$}\\ 
\end{tabular}\caption{Game values for $G_2^{and}$ computed around the minimal solution found for marker dropping times $(0,0)$.}\label{table:M2val}
\end{table}

\begin{figure}
\includegraphics[scale=0.25]{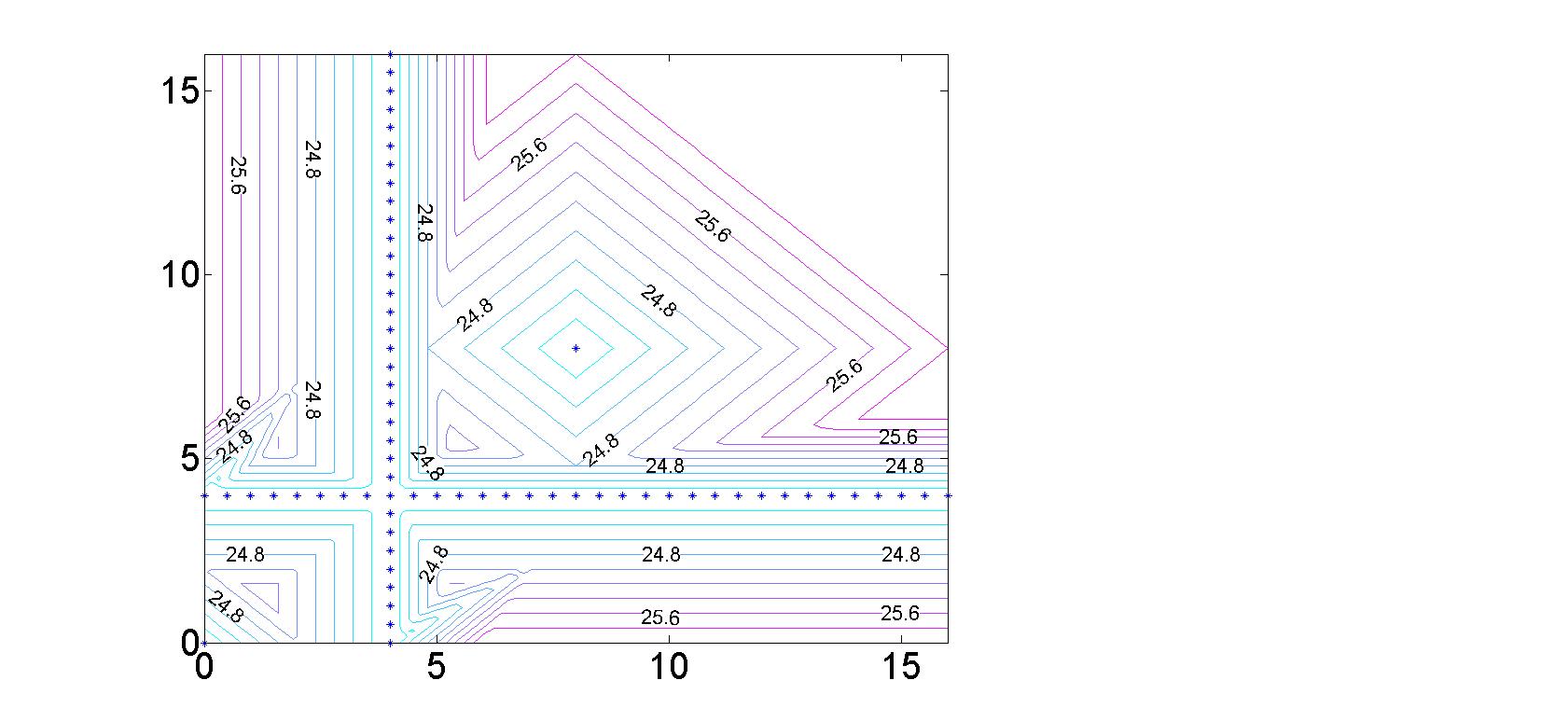}\caption{Contour plot of the values of the game $G_2^{and}$ for various combination of marker dropping times. The minimal value computed is $24$ for dropping times $(0,0), (8,8), (4,x)$ and $(x,4)$.}\label{2marksfigcontour}
\end{figure}
\begin{figure}
\hskip -3.cm\includegraphics[height=6cm, width=6cm]{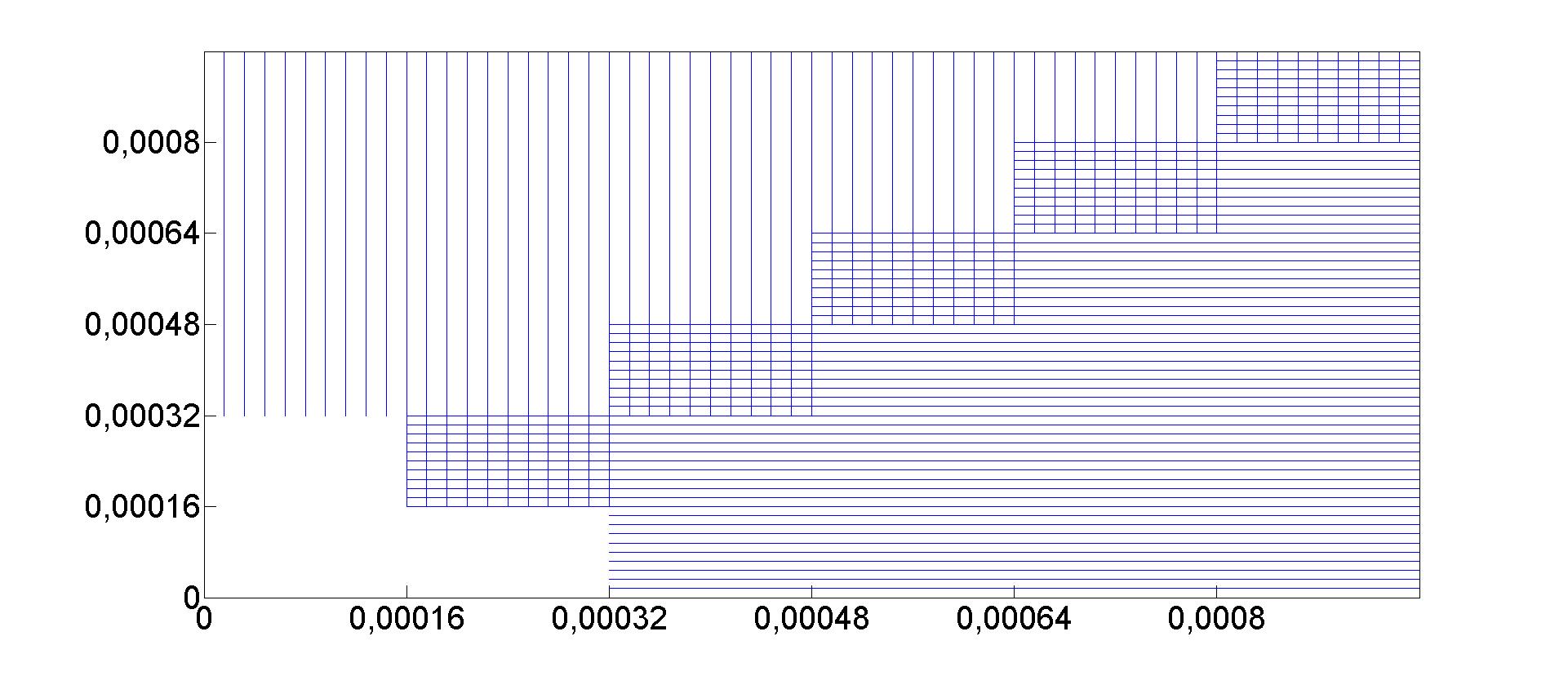}\includegraphics[height=6cm, width=6cm]{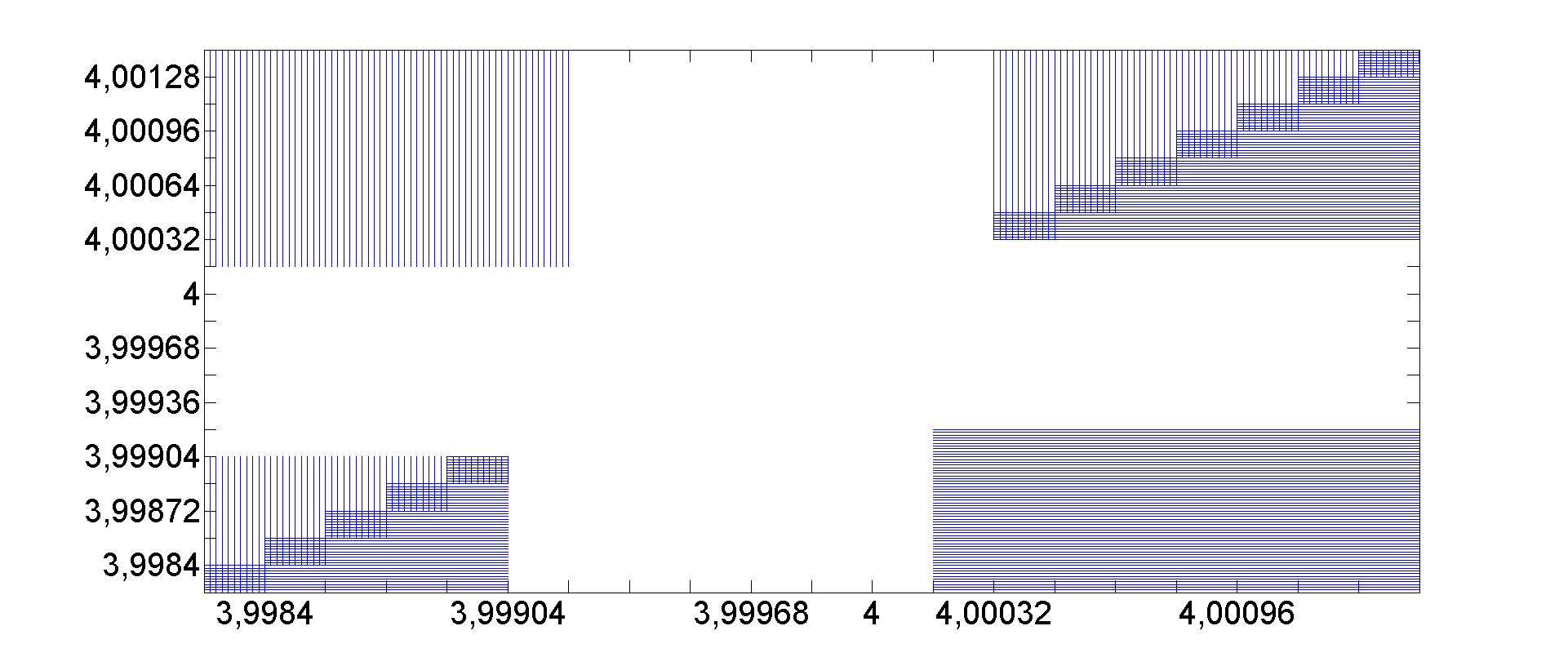}\includegraphics[height=6cm, width=6cm]{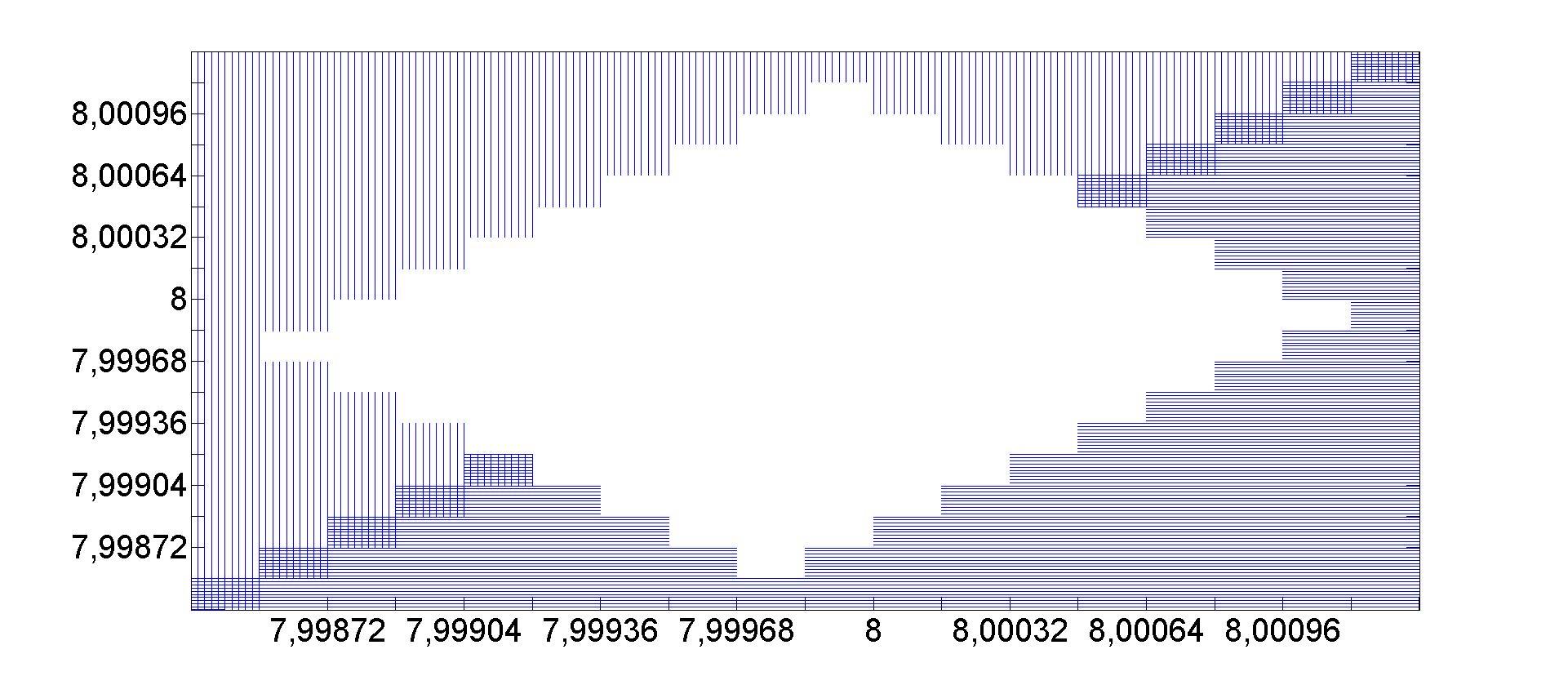}
\caption{}\label{fig:marker()}
\end{figure}


For the $G_2^{or}$ game, the minimal value computed is $20$ for dropping times $(8,8)$. The contour plot is displayed on Figure \ref{fig:giftor}. Interestingly, we observe that most of the dropping time values in the unit interval help the coordination.On the right of Figure \ref{fig:giftor} we see the excluded region by Theorem \ref{theo:two} that shows the optimal dropping times are in the region $[7.99856,8.00112]\times [7.99856,8.00112]$.

\newpage
\bibliographystyle{plain}
\bibliography{biblioRendezVous}
\end{document}